\DeclareMathSymbol{\Q}{\mathalpha}{AMSb}{"51}
\DeclareMathSymbol{\R}{\mathalpha}{AMSb}{"52}
\DeclareMathSymbol{\Z}{\mathalpha}{AMSb}{"5A}
\DeclareMathSymbol{\N}{\mathalpha}{AMSb}{"4E}
\DeclareMathSymbol{\C}{\mathalpha}{AMSb}{"43}
\newcommand{\G}{\ensuremath{{\mathbb{G}}}}
\newcommand{\supp}{\ensuremath{\text{\rm supp}}}
\newcommand{\X}{\ensuremath{{\mathcal{X}}}}
\newcommand{\A}{\ensuremath{\mathcal{A}}}
\newcommand{\Patt}[2]{\ensuremath{\mathcal{L}_{#2}(#1)}}
\newcommand{\Lang}[1]{\ensuremath{\mathcal{L}(#1)}}
\renewcommand{\vec}[1]{\mathbf{#1}}
\newcommand{\bz}{\mathbf{z}}
\newcommand{\bu}{\mathbf{u}}
\newcommand{\bv}{\mathbf{v}}
\newcommand{\bt}{\mathbf{t}}
\newcommand{\bw}{\mathbf{w}}
\newcommand{\bb}{\mathbf{b}}
\newcommand{\bo}{\vec{0}}
\newtheorem{theorem}{Theorem}
\newtheorem{lemma}[theorem]{Lemma}
\newtheorem{corollary}[theorem]{Corollary}
\newtheorem*{claim*}{Claim}
\theoremstyle{definition}
\newtheorem{example}[theorem]{Example}
\newtheorem{remark}[theorem]{Remark}
\title{On the periodic decompositions of multidimensional configurations
% \footnote{The research was supported by the Academy of Finland grant 354965}
}
\author{Pyry Herva and Jarkko Kari}
\date{Department of Mathematics and Statistics, University of Turku, Finland}
\begin{document}

\maketitle

\begin{abstract}
\noindent
% We consider $d$-dimensional configurations, that is, colorings of the $d$-dimensional integer grid $\Z^d$ with finitely many colors.
% Moreover, we interpret the colors as integers.
% It is known that if a configuration has a non-trivial annihilator, that is, if some non-trivial linear combination of its translations is the zero function, then it is a sum of finitely many periodic functions.
% This result is known as the periodic decomposition theorem.
% We prove two different improvements of the periodic decomposition theorem.
% The first improvement concerns configurations that have annihilators of specific type.
% The second improvement concerns configurations of particular type with non-trivial annihilators.
We consider $d$-dimensional configurations, that is, colorings of the $d$-dimensional integer grid $\Z^d$ with finitely many colors.
Moreover, we interpret the colors as integers so that configurations are functions $\Z^d \rightarrow \Z$ of finite range. We say that such function is $k$-periodic if it is invariant under translations in $k$ linearly independent directions.
1-periodic functions are called periodic.
It is known that if a configuration has a non-trivial annihilator, that is, if some non-trivial linear combination of its translations is the zero function, then it is a sum of finitely many periodic functions.
This result is known as the periodic decomposition theorem.
We prove two different improvements of it and discuss some applications of these improvements.
The first improvement gives a characterization on annihilators of a configuration to guarantee the $k$-periodicity of the functions in its periodic decomposition --- for any $k$. 
The periodic decomposition theorem is then a special case of this result with $k=1$. 
We discuss an application of this result concerning translational tilings.
The second improvement concerns so called sparse configurations for which the number of non-zero values in patterns grows at most linearly with respect to the diameter of the pattern. 
We prove that a sparse configuration with a non-trivial annihilator is a sum of finitely many periodic fibers where a fiber means a function whose support (that is, the set of points where the function gets non-zero values) is contained in a unique line.
As an application of this result, we show that $\R$-configurations with uniformly discrete supports that have non-trivial annihilators are necessarily periodic. 
\end{abstract}

\section{Introduction}

We work mostly with $d$-dimensional configurations $c\in \A^{\Z^d}$, that is, colorings of the $d$-dimensional integer grid $\Z^d$ with a non-empty finite set of colors $\A$ --- where $d$ is any positive integer.
In this paper we assume that $\A \subseteq \Z$.
% We study configurations that exhibit a form of weak periodicity.
We consider configurations $c$ that have non-trivial annihilators meaning that a non-trivial $\Z$-linear combination of some finitely many translated copies of $c$ is the zero function.
It is known that such configurations can be decomposed to sums of finitely many periodic functions \cite{fullproofs} --- we will refer to this result as the periodic decomposition theorem (Theorem \ref{thm: decomposition theorem}).
In this paper we show that under some additional assumptions the periodic decomposition theorem can be improved.
We prove two different improvements of the periodic decomposition theorem.

% MOTIVOINTIA VÄHÄN ENEMMÄN 
% NIVAT
% PERIODIC TILING CONJECTURE
% YMS...

In our considerations we apply an algebraic approach to multidimensional symbolic dynamics which was developed in \cite{icalp, fullproofs} to make progress in the famous still open Nivat's conjecture \cite{Nivat} which links periodicity and sufficiently small rectangular pattern complexity.
% More precisely, Nivat's conjecture states that
It is a two-dimensional generalization of the Morse-Hedlund theorem \cite{morse-hedlund}.
In the algebraic approach $d$-dimensional configurations are represented as formal power series with integer coefficients in $d$ variables.
An annihilator of a configuration is then a Laurent polynomial such that its formal product with the power series presenting the configuration is the zero power series.
A periodizer of a configuration is a Laurent polynomial such that its formal product with
the power series presenting the configuration is strongly periodic, that is, it has $d$ linearly independent period vectors.

In our first improvement of the periodic decomposition theorem we consider arbitrary configurations that have periodizers of particular type.
% This result is very much inspired by similar results in \cite{meyerovitch2023}.
More precisely, we assume that for some $k \in \{1,\ldots,d\}$ and for every $(k-1)$-dimensional linear subspace $V$ of $\R^d$ the configuration has a periodizer whose support (that is, the set that gives the non-zero terms of the periodizer) contains exactly one point of $V$.
We show that the configuration is then a sum of finitely many functions which all have $k$ linearly independent periods (Theorem \ref{thm: main result 1}).
If $k=1$, then the statement says just that any configuration with a non-trivial periodizer is a sum of finitely many periodic functions which is exactly the original periodic decomposition theorem.
Also, for $k=d$ the statement is known to be true \cite{DLT_invited_jarkko}.

The result described above is inspired by similar considerations in \cite{meyerovitch2023} by Meyerovitch et al. related to translational tilings and the periodic tiling conjecture.
A translational tiling by a non-empty finite set $D \subseteq \Z^d$ is a set $C \subseteq \Z^d$ such that $D \oplus C=\Z^d$. The non-empty finite set $D$ is called a tile, and the tiling $C$ may also be called a co-tiler of $D$.
Considering the indicator function of the co-tiler, we find that it is periodized by the polynomial presenting the tile $D$.
The periodic tiling conjecture states that if a tile has a co-tiler, then it also has a periodic co-tiler \cite{grunbaum-shepard,Lagarias-wang, stein}.
Newman showed already in 1977 using a simple pigeonholing argument that any translational tiling in dimension $d=1$ is necessarily periodic \cite{tesselation} and hence the periodic tiling problem is true for $d=1$.
For $d \geq 2$, the conjecture is much trickier.
% Only very recently, 
However, Bhattacharya proved 
% in a paper \cite{bhattacharya} published in 2020 
that the periodic tiling problem holds also for $d=2$ \cite{bhattacharya}.
% In \cite{Greenfeld2021structure}
Greenfeld and Tao proved a quantative version of Bhattacharyas result \cite{Greenfeld2021structure}.
Moreover, they obtained some structural results on translational tilings in any dimension.
In \cite{greenfeld-tao2022} they showed that the periodic tiling problem fails for sufficiently large $d$.
% In \cite{meyerovitch2023} the authors consider joint co-tilers.
For recent research around the periodic tiling problem, see \emph{e.g.}
\cite{grebik-greenfeld-rozhon-tao, greenfeld2023undecidabilitytranslationalmonotilings, greenfeld-tao-undecidable, meyerovitch2022notereductiontilingproblems, meyerovitch2023, greenfeld2023tilingspectralityaperiodicityconnected, pont2024periodicitydecidabilitytranslationaltilings}.

% As an application of the improvement we consider translational tilings..

% As a corollary (Corollary \ref{corollary: independent periodizers}) of this result we get a similar result as in \cite{meyerovitch2023} concerning translational tilings.
% This corollary is closely related
% to the still partially open periodic tiling conjecture \cite{Lagarias-wang,stein}.
% The periodic tiling conjecture is true in dimensions $d=1$ \cite{tesselation} and $d=2$ \cite{bhattacharya} but fails for sufficiently large $d$ \cite{greenfeld-tao2022}.

% There is an application of this result concerning translational tilings (Corollary \ref{corollary: independent periodizers}).
% This application is closely related to a similar result in \cite{meyerovitch2023} and to the still partially open periodic tiling conjecture \cite{Lagarias-wang,stein}.
% The periodic tiling conjecture is true in dimensions $d=1$ \cite{tesselation} and $d=2$ \cite{bhattacharya} but fails for sufficiently large $d$ \cite{greenfeld-tao2022}.

In our second improvement of the periodic decomposition theorem we consider configurations whose supports (that is, the sets of points where the configurations get non-zero values) are located very ``sparsely'' in the grid $\Z^d$.
Conveniently, we call these configurations sparse.
We show that if a sparse configuration has a non-zero annihilator, then it is a sum of finitely many periodic configurations whose supports are aligned on unique lines (Theorem \ref{thm: periodic decomposition sparse} and Corollary \ref{corollary: sparse configuration with annihilators is a sum of finitely many periodic fibers}).

The motivation to study sparse configurations with annihilators arises from the authors' study \cite{delone} concerning Delone sets where the concept of annihilators is generalized to $\R^d$-configurations, that is, functions $\R^d \rightarrow \Z$ of finite range.
Delone sets are uniformly discrete and relatively dense subsets of $\R^d$ that form mathematical models for crystals and quasicrystals \cite{aperiodic1}.
By the support of an $\R^d$-configuration we mean the set of points where the configuration gets non-zero values.
Thus, in \cite{delone} we study $\R^d$-configurations whose supports are Delone sets.
In \cite{delone}, many results concerning $\Z^d$-configurations are generalized to $\R^d$-configurations.
Our research question in \cite{delone} related to this article originally asked whether one-dimensional $\R$-configurations with Delone set supports that have non-trivial annihilators are necessarily periodic.
Using our periodic decomposition theorem for sparse configurations we prove that this holds even for $\R$-configurations with uniformly discrete supports that have non-trivial annihilators (Theorem \ref{thm: delone application}).
% $\R$-configurations with Delone set supports are studied...
% VIITTEITÄ DELONE-JOUKKOIHIN LIITTYEN

% This version of the periodic decomposition theorem is used in the study of colorings of 1-dimensional Delone sets with non-trivial annihilators and in proving that this kind of colorings are necessarily periodic \cite{delone}.

This article is an extended version of the conference paper \cite{sofsem2025}.

\subsection*{Structure of the paper}

We begin with preliminaries in Section \ref{sec: preliminaries} by presenting the necessary concepts and definitions and some past results relevant to us.
In Sections \ref{sec: main result 1} and \ref{sec: main result 2} we state and prove our improvements of the periodic decomposition theorem and discuss some related results.
The main result of Section \ref{sec: main result 1}, that is, our first improvement of the periodic decomposition theorem is Theorem \ref{thm: main result 1} and its application to translational tilings is Corollary \ref{corollary: independent tiles}.
The main result of Section \ref{sec: main result 2}, that is, our periodic decomposition theorem for sparse configurations is Theorem \ref{thm: periodic decomposition sparse}
and the application of this result is Theorem \ref{thm: delone application}.
% We conclude with Section \ref{sec: conclusions}.

\section{Preliminaries} \label{sec: preliminaries}

% We begin by presenting the necessary concepts and definitions.

\subsection{Notation}

As usual, we denote by $\Z$, $\Q$, $\R$, and $\C$ the sets of integers, rational numbers, real numbers, and complex numbers, respectively.
By $\Z_+$ we mean the set of positive integers and by $\N= \Z_+ \cup \{0\}$ the set of non-negative integers.
% We denote by $\lfloor x \rfloor$ and $\{x\}= x - \lfloor x \rfloor$ the integer and fractional parts of a real number $x \in \R$, respectively.
% We use common notations of linear algebra.
For any two sets $A$ and $B$, we denote by $A^B$ the set of all functions from $B$ to $A$.
For any vectors $\bv_1,\ldots,\bv_k \in \R^d$, we denote
by $L(\bv_1,\ldots,\bv_k) = \{ a_1\bv_1 + \ldots+a_k \bv_k \mid a_1,\ldots,a_k \in \R \}$
the linear subspace of $\R^d$ spanned by them.
Similarly for any (possibly infinite) set $S\subseteq \R^d$, we denote by $L(S)$ the subspace generated by $S$.
On the other hand, by $\langle \bv_1,\ldots,\bv_k \rangle$ we mean the subgroup of $\R^d$ generated by the vectors $\bv_1,\ldots,\bv_k$.
We also use the notation $\Z[S]$ to denote the abelian group generated by the set $S$.
By $\G_k$ we denote the set of all $k$-dimensional subspaces of $\R^d$ for a fixed $d \in \Z_+$.
Clearly, if $k>d$, then $\G_k = \emptyset$.

\subsection{Configurations and periodicity}

In this paper we consider mostly elements of the set $\Z^{\Z^d}$, that is, functions 
$$
c \colon \Z^d \rightarrow \Z
$$ 
where $d$ is a positive integer --- the dimension.
However, in some occasions, we also work with the set $\C^{\Z^d}$.
% In the end of the paper, we also shortly consider the set $\C^{\R}$.
% In the or the set $\C^{\R}$.
and hence most of the definitions and some of the statements are given more generally for the elements of this set.
% Also, functions of the sets $\Q^{\Z^d}$, $\R^{\Z^d}$, or $\C^{\Z^d}$ may be considered.

% \begin{definition}
A ($d$-dimensional) \emph{configuration} or a \emph{$\Z^d$-configuration} is a function
$c \in \Z^{\Z^d}$ that has only finitely many different values.
% then $c$ is called a ($d$-dimensional) \emph{configuration}.
% \end{definition}
% The support of $c$ is the set $\supp(c) = \{\bu \mid c(\bu) \neq 0\}$.
% \noindent
In particular, configurations are functions of the set $\A^{\Z^d}$ for some finite $\A \subseteq \Z$.
% which is called the \emph{alphabet} of $c$.
The set $\A^{\Z^d}$ is called the \emph{$d$-dimensional configuration space over $\A$}.
Notice the distinction between the concepts of configurations and general functions of the set $\Z^{\Z^d}$. Indeed, the range of a general such function is allowed to be infinite while the range of a configuration is always finite.

% The translation $\tau^{\bv}(c)$ of a configuration $c$ by a vector $\bv \in \Z^d$ is defined such that $\tau^{\bv}(c)(\bu)$

Any function $c \in \C^{\Z^d}$ is \emph{$\bv$-periodic} if $c(\bu + \bv) = c(\bu)$ for all $\bu \in \Z^d$. In this case $\bv$ is called a \emph{period} of $c$.
If $c$ has a non-zero period, then it is \emph{periodic}.
% Otherwise, it is called non-periodic.
Let us denote by $\tau^{\bv}(c)$ the \emph{translation} of $c \in \C^{\Z^d}$ by a vector $\bv \in \Z^d$ defined such that $\tau^{\bv}(c)(\bu) = c( \bu -\bv)$ for all $\bu \in \Z^d$.
Hence, $c$ is $\bv$-periodic if and only if $\tau^{\bv}(c) =c$.

A function $c \in \C^{\Z^d}$ is \emph{$k$-periodic} if 
it has $k$ linearly independent periods.
In particular, $d$-periodic functions are called \emph{strongly periodic}.
(Note that if $c \in \Z^{\Z^d}$ is strongly periodic, then $c$ is necessarily a configuration.)
We say that
$c \in \C^{\Z^d}$ is \emph{periodic in direction $\bv \in \Q^d$} if it is $k \bv$-periodic for some $k \in \Z$. 
It is \emph{$V$-periodic} for a vector space $V \subseteq \R^d$ if it is periodic in direction $\bv$ for all $\bv \in V \cap \Q^d$.
Clearly, any $k$-periodic function is $V$-periodic for some $V \in \G_k$.
Finally, $c \in \C^{\Z^d}$ is \emph{$\bv$-periodic in a subset} $S \subseteq \Z^d$ if for all $\bu \in S$ such that $\bu + \bv \in S$ it follows that $c(\bu) = c(\bu+\bv)$.

A \emph{(finite) pattern} is a function $p \in \C^D$ for some finite \emph{shape} $D \subseteq \Z^d$.
In this case $p$ is also called a \emph{$D$-pattern}.
We say that a function $c \in \C^{\Z^d}$ \emph{contains} a $D$-pattern $p \in \C^D$ if $\tau^{\bt}(c) \restriction _D = p$ for some $\bt \in \Z^d$.
We denote by $\Patt{c}{D} = \{ \tau^{\bt}(c) \restriction_D \mid \bt \in \Z^d\}$ the set of of all $D$-patterns of $c$ and by 
$$
\Lang{c} = \bigcup_{\text{finite } 
 D\subseteq\Z^d} \Patt{c}{D}
$$ 
the set of all finite patterns of $c$.

The \emph{support} of $c \in \C^{\Z^d}$ 
% (or any function $c\in \C^G$ where $G$ is any group)
is the set 
$$
\supp(c) = \{\bu \mid c(\bu) \neq 0\}.
$$
If $\supp(c)$ is finite, then $c$ is called \emph{finitely supported}.

Let us denote by $C_m = \{ -m , \ldots , m\}^d$ for $m \in \N$ the discrete $d$-dimensional hypercube of size $(2m+1)^d$ centered at the origin. 
% In this paper 
Any function $c \in \C^{\Z^d}$ is called \emph{sparse} if there exists a positive integer $a$ such that
$$
|\supp(c) \cap (C_m + \bt)| \leq am
$$
for all $m \in \Z_+$ and for all $\bt \in \Z^d$.
Such $a$ is called a sparseness constant of $c$.

% Let us review some basic concepts of symbolic dynamics we need. For a reference see \emph{e.g.} \cite{tullio,kurka,lindmarcus}.
The configuration space $\A^{\Z^d}$ can be made a compact topological space by endowing $\A$ with the discrete topology and considering the product topology it induces on $\A^{\Z^d}$ -- the \emph{prodiscrete topology}.
This topology is
induced by a metric where two configurations are close if they agree on a large area around the origin.
By compactness in this topology every sequence of configurations has a converging subsequence.

A \emph{subshift} is a closed and translation-invariant subset of the configuration space $\A^{\Z^d}$.
% Equivalently, subshifts can be defined by using forbidden patterns:
% Given a set $F \subseteq \A^*$ of \emph{forbidden} patterns, the set
% $$
% X_F=\{c \in \A^{\Z^d} \mid  \Lang{c} \cap F = \emptyset \}
% $$
% of configurations that avoid all forbidden patterns
% is a subshift. Moreover, every subshift is obtained by forbidding some set of finite patterns.
% If $F \subseteq \A^*$ is finite, then $X_F$ is a \emph{subshift of finite type} (SFT).
The \emph{orbit} of a configuration $c \in \A^{\Z^d}$ is the set $\mathcal{O}(c) = \{ \tau^{\bt}(c) \mid \bt \in \Z^d \}$ of its every translate.
The
\emph{orbit closure} $\overline{\mathcal{O}(c)}$ is the topological closure of its orbit under the prodiscrete topology.
It is the smallest subshift that contains $c$.
It consists of all configurations $c'$ such that $\Lang{c'}\subseteq \Lang{c}$.

\subsection{Algebraic concepts}

As in \cite{icalp} we present a function $c \in \C^{\Z^d}$ as a formal power series 
$$
c(X) = \sum_{\bu \in \Z^d} c_{\bu} X^{\bu}
$$
in $d$ variables $X=(x_1,\ldots,x_d)$
where we have used the convenient abbreviations $X^{\bu} = x_1^{u_1}\cdots x_d^{u_d}$ and $c_{\bu} = c(\bu)$ for a vector $\bu=(u_1,\ldots,u_d) \in \Z^d$.

We consider Laurent polynomials with integer coefficients which we call from now on simply polynomials.
By a non-trivial polynomial we mean a non-zero polynomial.
A multiplication of a formal power series $c(X)=\sum_{\bu\in\Z^d}c_{\bu} X^{\bu}$ with complex coefficients by a polynomial $f=f(X)=\sum_{i=1}^nf_i X^{\bu_i}$ is well defined as the formal power series
$$
fc=f(X) c(X) =
\sum_{\bu \in \Z^d} \left (\sum_{i=1}^n f_i c_{\bu} \right ) X^{\bu+\bu_i}
=\sum_{\bu \in \Z^d} \left (\sum_{i=1}^n f_i c_{\bu-\bu_i} \right ) X^{\bu}.
$$
Thus, the value of $fc$ at $\bu \in \Z^d$ is determined by the values of $c$ at $\bu-\bu_1, \ldots , \bu-\bu_n$.

\begin{remark}
More generally, we could and sometimes need to consider also Laurent polynomials with complex coefficients but in this paper polynomials are Laurent polynomials with integer coefficients as defined above.
\end{remark}

By the product of a function $c \in \C^{\Z^d}$ and a polynomial $f$ we mean the above product and denote it conveniently by $fc$ as usual.
Since any polynomial $f$ can be thought as a finitely supported function in $\Z^{\Z^d}$, the above product can also be viewed as the discrete convolution of $f$ and $c$ (which is always defined because $f$ is finitely supported).
By the support of a polynomial $f = \sum f_{\bu} X^{\bu}$ we mean the (finite) support of this function, that is, the set
$$
\supp(f) = \{ \bu \mid f_{\bu} \neq 0 \}.
$$

We say that a polynomial
$f$
\emph{annihilates} (or is an \emph{annihilator} of) a function $c \in \C^{\Z^d}$ if
$
fc = 0.
$
Since multiplying $c \in \C^{\Z^d}$ by the monomial $X^{\bv}$ corresponds to translating it by the vector $\bv$, that is, $X^{\bv}c=\tau^{\bv}(c)$, it follows that $c$ is $\bv$-periodic if and only if it is annihilated by the difference polynomial $X^{\bv}-1$.
% Hence, annihilation by a non-trivial polynomial can be seen as a form of weak periodicity.
The set of every annihilator of $c$ is quite easily seen to be an ideal of the polynomial ring
and hence the question whether $c$ is periodic reduces to the question whether its ``annihilator ideal'' contains a non-trivial difference polynomial.

We say that a polynomial $f$ \emph{periodizes} (or is a \emph{periodizer} of) $c \in \C^{\Z^d}$ if $fc$ is strongly periodic.
Clearly, any annihilator of $c$ is also a periodizer of $c$.
On the other hand, if $c$ has a periodizer $f$, then $fc$ is periodic and hence annihilated by some non-trivial difference polynomial $X^{\bv}-1$.
Thus, $c$ has a non-trivial annihilator if and only it has a non-trivial periodizer.

A \emph{line polynomial} $f$ is a polynomial whose support contains at least two points and the points of the support are aligned on a line, that is, there exist vectors $\bu,\bv \in \Z^d$ such that $\supp(f) \subseteq \bu + \Q \bv$.
Such $\bv$ (which is non-zero) is called a \emph{direction} of $f$ and we may say that $f$ is a \emph{line polynomial in direction $\bv$}.
As usual, we say that two vectors $\bv,\bv'$ are \emph{parallel} if $\bv' \in \Q \bv$ (and vice versa), {\it i.e.}, if they are linearly dependent over $\Q$.
Otherwise, they are \emph{non-parallel} in which case they are linearly independent over $\Q$.
Clearly, any line polynomial in direction $\bv$ is also a line polynomial in any parallel non-zero direction $\bv'$.

% FIBERS
A function $c \in \C^{\Z^d}$ is a \emph{$\bv$-fiber} for a non-zero vector $\bv \in \Q^d$ if its support is contained in a line in direction $\bv$, that is, if $\supp(c) \subseteq \bu + \Q \bv$ for some $\bu \in \Z^d$.
Any $\bv$-fiber is called simply a \emph{fiber}.
Note that by interpreting polynomials as functions as discussed earlier, line polynomials are finitely supported fibers.
% Hence, fibers generalize the concept of line polynomials.
By a \emph{$\bv$-fiber of $c \in \C^{\Z^d}$} we mean a function that agrees with $c$ on $\bu + \Q \bv$ for some $\bu \in \Z^d$ and gets value 0 elsewhere.
Clearly, any $\bv$-fiber of $c$ is a $\bv$-fiber.

Note that if a configuration $c \in \A^{\Z^d}$ is annihilated by a line polynomial, then it is necessarily periodic.
Indeed, annihilation by a line polynomial in direction $\bv$ defines a recurrence relation on each $\bv$-fiber of $c$ which implies periodicity of each $\bv$-fiber since $\A$ is finite.
Moreover, the smallest periods of the $\bv$-fibers are bounded and hence also $c$ is $\bv$-periodic.
In particular, any one-dimensional configuration with a non-trivial annihilator is periodic since all polynomials in only one variable are line polynomials.
Conversely, periodicity implies annihilation by a non-trivial difference polynomial which is a line polynomial.
Thus, a configuration $c$ is periodic if and only if it is annihilated by a line polynomial.

\subsubsection*{Special annihilators}

It is known that if a configuration $c \in \A^{\Z^d}$ has a non-trivial periodizer, then it has an annihilator which is a product of difference polynomials.
% To state this result, precisely we need some terminology.
% A \emph{dilation} of a polynomial $f(X)$ is a polynomial of the form $f(X^k)$ for some integer $k$.
More precisely, we have the following theorem.
% By a dilation con

\begin{theorem}[\cite{icalp}, \cite{fullproofs}] \label{thm: special annihilator}
Let $c\in \A^{\Z^d}$ be a configuration and assume that it has a non-trivial annihilator $f$.
For all $\bu \in \supp(f)$, there exist pairwise non-parallel vectors $\bv_1,\ldots,\bv_m \in \Z^d$ such that each $\bv_i$ is an integer multiple of $\bu_i-\bu$ for some $\bu_i \in \supp(f) \setminus \{\bu\}$, and $c$ is annihilated by the polynomial
$$
(X^{\bv_1}-1) \cdots (X^{\bv_m}-1).
$$
\end{theorem}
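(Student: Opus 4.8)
The plan is to prove the statement in two stages: first I would produce an annihilator of $c$ that is a \emph{product of line polynomials}, and then upgrade each line-polynomial factor to a difference polynomial $X^{\bv_i}-1$, reading off the required directions from $\supp(f)$ at the end. Since translating $c$ and $f$ together preserves the hypotheses, I may assume $\bu=\bo$ when convenient; in fact the annihilator I construct will not depend on $\bu$, and only the verification that each $\bv_i$ is parallel to some $\bu_i-\bu$ will be carried out separately for each $\bu\in\supp(f)$.

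\textbf{Stage I (a line-polynomial annihilator).} Factor $f$ into irreducibles over $\Q$, let $L$ be the product of those factors that are line polynomials and $g$ the product of the rest, so that $f=c_0\,Lg$ for a nonzero integer $c_0$ and $g$ has no line-polynomial factor. (An irreducible factor of a line polynomial is again a line polynomial: from $\mathrm{Newt}(ab)=\mathrm{Newt}(a)+\mathrm{Newt}(b)$ a segment can only split as a Minkowski sum of parallel segments and points.) Now $c'=Lc$ is again a configuration, since $c$ is finite-valued and $L$ has finite support, and it is annihilated by $g$ because $gc'=gLc=c_0^{-1}fc=0$. The main obstacle is the following fact, which is the heart of the algebraic method of \cite{icalp} and is exactly where the finiteness of $\A$ is used in an essential way: \emph{a polynomial with no line-polynomial factor can annihilate only the zero configuration.} Granting this, $c'=0$, i.e.\ $L$ itself annihilates $c$.

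\textbf{Stage II (upgrading to difference polynomials).} Write $L=\ell_1\cdots\ell_m$, where after grouping parallel factors (a product of line polynomials in a common direction $\bv$ is again a line polynomial in direction $\bv$) the $\ell_i$ are line polynomials in \emph{pairwise non-parallel} directions $\bv_i$. I replace the factors one at a time. Suppose $c$ is annihilated by $(X^{n_1\bv_1}-1)\cdots(X^{n_{i-1}\bv_{i-1}}-1)\,\ell_i\ell_{i+1}\cdots\ell_m$ (for $i=1$ this is $L$). Setting $c_i=(X^{n_1\bv_1}-1)\cdots(X^{n_{i-1}\bv_{i-1}}-1)\,\ell_{i+1}\cdots\ell_m\cdot c$, which is again a configuration, we get $\ell_i c_i=0$. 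As $\ell_i$ is a line polynomial in direction $\bv_i$, the principle recalled just before the theorem shows $c_i$ is periodic in direction $\bv_i$, so $X^{n_i\bv_i}-1$ annihilates $c_i$ for some $n_i\in\Z_+$; this replaces $\ell_i$ by $X^{n_i\bv_i}-1$ in the annihilator. After $m$ steps $c$ is annihilated by $(X^{n_1\bv_1}-1)\cdots(X^{n_m\bv_m}-1)$, and the vectors $n_i\bv_i$ remain pairwise non-parallel.

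\textbf{Directions.} Finally, fix any $\bu\in\supp(f)$ and any $i$. Choose a unimodular change of coordinates taking the primitive vector parallel to $\bv_i$ to $\be_1$; then $\ell_i$ becomes a monomial times a non-constant polynomial $p(x_1)$, and since $\ell_i$ divides $f$, grouping $f$ according to the exponents of $x_2,\ldots,x_d$ shows that every non-zero fiber of $f$ in the $\be_1$-direction is divisible by $p(x_1)$, hence non-constant in $x_1$. In particular the fiber through $\bu$ contains a second point $\bu_i\in\supp(f)$, and $\bu_i-\bu$ is parallel to $\bv_i$. Since passing from $\bv_i$ to $n_i\bv_i$ preserves parallelism, the vectors $n_1\bv_1,\ldots,n_m\bv_m$ satisfy all the requirements, which proves the theorem.
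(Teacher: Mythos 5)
Your Stage~I hinges on the asserted ``fact'' that a polynomial with no line-polynomial factor can annihilate only the zero configuration, and this fact is false. Take $d=2$, let $c$ be the constant configuration $c\equiv 1$, and let $f = x+y-2$. Then $fc=0$, since the value of $fc$ at every cell is $1+1-2=0$, and $c\neq 0$; but $f$ is irreducible over $\Q$ (it has degree one in $y$ and the coefficients $1$ and $x-2$ have no common factor), and its support $\{(0,0),(1,0),(0,1)\}$ is not contained in a line, so $f$ has no line-polynomial factor whatsoever. In your notation $L=1$ and $g=f$, so $c'=Lc=c\neq 0$ and the reduction collapses at its first step. More generally, any $f$ with $f(1,\ldots,1)=0$ annihilates all constant configurations, and such $f$ generically has no line factor; note that the theorem's conclusion still holds in the example (e.g.\ $(x-1)(y-1)$ annihilates $c\equiv 1$, with directions parallel to differences of points of $\supp(f)$, as required), but these difference polynomials are elements of the annihilator ideal that are in no sense factors of $f$. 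So the vectors $\bv_i$ cannot be read off from a factorization of $f$; any proof must produce them by a different mechanism, and the same example also kills the weaker variant that every annihilator of a nonzero configuration is \emph{divisible} by a line polynomial.

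For comparison: the present paper does not prove this theorem at all --- it quotes it from the cited references --- and the argument there uses the finiteness of $\A$ in a quite different place than you do. From the Frobenius congruence $f(X)^p \equiv f(X^p) \pmod{p}$ and $fc=0$ one gets that every coefficient of $f(X^p)c$ is divisible by $p$; since $c$ takes finitely many values, these coefficients are bounded independently of $p$, so $f(X^p)c=0$ for every sufficiently large prime $p$. One then combines such annihilators $f(X^p)$, $f(X^q)$ for large primes (after normalizing by a monomial so that the chosen $\bu\in\supp(f)$ sits at the origin) to extract an annihilator of the form $(X^{\bv_1}-1)\cdots(X^{\bv_m}-1)$ with each $\bv_i$ parallel to some $\bu_i-\bu$, $\bu_i\in\supp(f)\setminus\{\bu\}$. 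Your Stages~II and~III are correct as far as they go: grouping parallel line factors, converting a line-polynomial annihilator of a configuration into a difference-polynomial annihilator via the periodicity remark in the preliminaries, and the fiberwise divisibility argument identifying the directions are all sound. But they are applied to a product $L$ that in general does not annihilate $c$, so the proposal as written does not prove the theorem.
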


For some considerations of this paper, we need
a more involved formulation of the above theorem.
To do this, we need to review some parts of the proof of the theorem.
A \emph{dilation} of a polynomial $f(X)$ is a polynomial of the form $f(X^k)$ for some integer $k$.

\begin{lemma}[The dilation lemma \cite{fullproofs}]
  Let $c \in \A^{\Z^d}$ be a configuration and $f(X)$ a non-trivial annihilator of $c$. 
  There exists a positive integer $r$ such that for every positive integer $k$ with $\gcd(k,r)=1$ also $f(X^k)$ annihilates $c$.
\end{lemma}

\noindent
Let us call a number $r$ that has the property of the above lemma a \emph{dilation constant} of $c$ with respect to $f$.
The following lemma was proved in the proof of Theorem \ref{thm: special annihilator}.

\begin{lemma}[\cite{fullproofs}]
% \label{thm: special annihilator}
\label{lemma1}
Let $c\in\A^{\Z^d}$ be a configuration and $f$ a non-trivial annihilator of $c$.
For all $\bu \in \supp(f)$, the configuration $c$ is annihilated by the polynomial
$$
\prod_{\bv \in \supp(f) \setminus \{\bu\}} (X^{r(\bv-\bu)}-1)
$$ 
where $r$ is any dilation constant of $c$ with respect to $f$.
% If $r$ is a dilation constant of $c$ with respect to $f$, then for all $\bu \in \supp(f)$ the configuration $c$ is annihilated by the polynomial
% $$
% \prod_{\bv \in \supp(f) \setminus \{\bu\}} (X^{r(\bv-\bu)}-1).
% $$ 
\end{lemma}

\noindent
From the proof of the dilation lemma we adapt the following result.
% For completeness, we provide a short description of the proof.

\begin{lemma}[Adapted from \cite{fullproofs}] \label{lemma2}
  Let $\mathcal{I}$ be an arbitrary index set, and let $\left ( c^{(i)} \right ) _{i \in \mathcal{I}}$ be a collection of configurations over the same finite alphabet $\A \subseteq \Z$.
  If $f$ is a non-trivial annihilator of $c^{(i)}$ for every $i \in \mathcal{I}$, then the configurations in the collection have a common dilation constant with respect to $f$.
\end{lemma}

\begin{proof}
Let $c_{\text{max}}$ be the maximum absolute value of the coefficients of the configurations $c^{(i)}$.
Since the configurations $c^{(i)}$ are over the same alphabet, such number exists.
Let $f = \sum_{\bv \in \supp(f)} f_{\bv}X^{\bv}$ and define $s= c_{\text{max}} \sum_{\bv \in \supp(f)} |f_{\bv}|$.
In the proof of the dilation lemma in \cite{fullproofs}, it was shown that $r=s!$ is a dilation constant of any $c^{(i)}$.
The claim follows.
\end{proof}

\noindent
We need one more lemma from \cite{fullproofs}.

\begin{lemma}[\cite{fullproofs}]
\label{lemma3}
Let $c$ be a configuration. If $\varphi_1,\ldots,\varphi_m$ are line polynomials such that $\varphi_1^{e_1}\cdots\varphi_m^{e_m}$ annihilates $c$ for some positive integers $e_1,\ldots,e_m$,
then also $\varphi_1\cdots\varphi_m$ annihilates $c$.
\end{lemma}

\noindent
Combining the above lemmas we have the following result.

\begin{theorem}
\label{theorem: special annihilator common}
Let $\mathcal{I}$ be an arbitrary index set, and let $\left ( c^{(i)} \right ) _{i \in \mathcal{I}}$ be a collection of configurations over the same finite alphabet $\A \subseteq \Z$.
  If $f$ is a non-trivial annihilator of $c^{(i)}$ for every $i \in \mathcal{I}$, then the configurations in the collection have a common annihilator which is of the form 
  $$
  (X^{\bv_1}-1)\cdots(X^{\bv_m}-1)
  $$
  where the vectors $\bv_1,\ldots,\bv_m$ are pairwise non-parallel.
\end{theorem}

\begin{proof}
By Lemma \ref{lemma2}, the configurations $c^{(i)}$ have a common dilation constant $r$ and hence by Lemma \ref{lemma1} they have a common annihilator
$$
g=\prod_{\bv \in \supp(f) \setminus \{\bu\}} (X^{r(\bv-\bu)}-1) =(X^{\bw_1}-1)\cdots (X^{\bw_k}-1).
$$
Assume that for some $i\neq j$, the vectors $\bw_i$ and $\bw_j$ are parallel, that is,
$q\bw_i= p \bw_j$ for some $p,q \in \Z\setminus\{0\}$.
  % Hence, $q \bv_i = p \bv_j$.
We may replace $X^{\bw_i}-1$ in $g$ by $X^{p \bw_j}-1 = X^{q \bw_i}-1$.
  Since 
  $$
  X^{p \bw_j}-1 = (X^{\bw_j}-1)(X^{(p-1)\bw_j} + \ldots + X^{\bv_j} + 1),
  $$
   we may replace $(X^{p\bw_j}-1)(X^{\bw_j}-1)$ by $(X^{p\bw_j}-1)^2$.
  Iterating this argument we conclude that the configurations $c^{(i)}$ have a common annihilator
  $$
  (X^{\bv_1}-1)^{e_1} \cdots (X^{\bv_m}-1)^{e_m}
  $$
  where the vectors $\bv_1,\ldots,\bv_m$ are pairwise linearly independent over $\Q$ and $e_1,\ldots,e_m$ are positive integers.
  Consequently, by Lemma \ref{lemma3}, we conclude that the configurations $c^{(i)}$ have a common annihilator
  $$
    (X^{\bv_1}-1)\cdots(X^{\bv_m}-1).
  $$
\end{proof}

\subsubsection*{Periodic decomposition theorem}

The multiplication of $c$ by a difference polynomial can be thought of as a ``discrete derivation'' of $c$.
Theorem \ref{thm: special annihilator} says that if a configuration $c\in \A^{\Z^d}$ has a non-trivial periodizer, then there is a sequence of derivations which annihilates $c$.
So, by
``integrating'' step by step, we have the following periodic decomposition theorem.

\begin{theorem}[Periodic decomposition theorem \cite{fullproofs}] \label{thm: decomposition theorem}
Let $c\in \A^{\Z^d}$ be a configuration and assume that it has a non-trivial periodizer.
There exist pairwise non-parallel vectors $\bv_1,\ldots,\bv_m$ and functions $c_1,\ldots,c_m \in \Z^{\Z^d}$ such that
$$
c = c_1 + \ldots + c_m
$$
and each $c_i$ is $\bv_i$-periodic.
\end{theorem}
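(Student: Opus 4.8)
The plan is to deduce the decomposition from Theorem \ref{thm: special annihilator} by an ``integration'' argument. Since having a non-trivial periodizer is equivalent to having a non-trivial annihilator (as established above), Theorem \ref{thm: special annihilator} provides pairwise non-parallel vectors $\bv_1,\ldots,\bv_m$ with
$$ (X^{\bv_1}-1)\cdots(X^{\bv_m}-1)\, c = 0. $$
I would then induct on $m$. The case $m=1$ is immediate: $(X^{\bv_1}-1)c = 0$ says exactly that $c$ is $\bv_1$-periodic, so $c_1 = c$ works.

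For the inductive step, write $h = (X^{\bv_2}-1)\cdots(X^{\bv_m}-1)\,c$. Since $(X^{\bv_1}-1)h = 0$, the function $h$ is $\bv_1$-periodic. The goal is to peel off one $\bv_1$-periodic summand: I would produce a $\bv_1$-periodic $c_1 \in \Z^{\Z^d}$ with $(X^{\bv_2}-1)\cdots(X^{\bv_m}-1)\,c_1 = h$. Then $(X^{\bv_2}-1)\cdots(X^{\bv_m}-1)(c-c_1) = 0$, and the induction hypothesis applied to $c - c_1 \in \Z^{\Z^d}$ (now annihilated by a product of the $m-1$ pairwise non-parallel factors $X^{\bv_2}-1,\ldots,X^{\bv_m}-1$) yields $c - c_1 = c_2 + \cdots + c_m$ with each $c_i$ being $\bv_i$-periodic, completing $c = c_1 + \cdots + c_m$.

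Everything therefore reduces to the following one-step integration lemma, which I expect to be the crux: if $\bv,\bw \in \Z^d$ are non-parallel and $h \in \Z^{\Z^d}$ is $\bv$-periodic, then there is a $\bv$-periodic $g \in \Z^{\Z^d}$ with $(X^{\bw}-1)g = h$. Iterating it with $\bw = \bv_2,\ldots,\bv_m$ (each non-parallel to $\bv_1$) solves the displayed equation for $c_1$ while keeping every intermediate function $\bv_1$-periodic. To prove the lemma I would decompose $\Z^d$ into cosets of the rank-two sublattice $\Z\bv + \Z\bw$ (rank two because non-parallel vectors are $\Z$-linearly independent). On a coset, parametrizing points as $\bu_0 + a\bv + b\bw$ turns the equation into the telescoping recurrence
$$ g(\bu_0 + a\bv + (b-1)\bw) - g(\bu_0 + a\bv + b\bw) = h(\bu_0 + a\bv + b\bw) $$
in the single variable $b$, while $\bv$-periodicity of $h$ makes the right-hand side independent of $a$; fixing the value at $b=0$ and summing gives a well-defined integer solution that is automatically independent of $a$, i.e. $\bv$-periodic. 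The main obstacle is exactly this lemma: one must set up the coset decomposition so that within each coset the parametrization $(a,b) \mapsto \bu_0 + a\bv + b\bw$ is a genuine bijection onto $\Z^2$ (ensured by the $\Z$-linear independence of $\bv,\bw$), so that the recurrence carries no cyclic consistency condition, and one must observe that integer-valuedness is preserved since only finite telescoping sums occur --- at the possible cost of unbounded values, which is permitted because the $c_i$ need only lie in $\Z^{\Z^d}$ rather than be configurations.
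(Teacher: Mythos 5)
Your proof is correct and follows essentially the same route as the paper: your one-step integration lemma is precisely Lemma \ref{lemma: auxiliary lemma 1 for main result 1} specialized to difference polynomials and trivial $V$ (the same coset-of-$\Z\bv+\Z\bw$ telescoping construction, with the same observation that integrality is preserved), and your induction peeling off one $\bv_1$-periodic summand is a minor rearrangement of the induction in Lemma \ref{lemma: auxiliary lemma 2 for main result 1}. The only point to make explicit is that, since $c-c_1$ need not be a configuration, the induction must formally range over all functions in $\Z^{\Z^d}$ annihilated by a product of pairwise non-parallel difference polynomials rather than over configurations --- your argument already supports this strengthened statement, and it is exactly how the paper phrases its lemma.
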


\begin{remark}\label{remark: periodic decomposition theorem}
The periodic functions $c_i$ in the periodic decomposition of a configuration $c$ may not be configurations, that is, their ranges may be infinite \cite{kari-szabados-multidimensional_words}.
\end{remark}

\section{Periodic decomposition of configurations with annihilators of particular type} \label{sec: main result 1}

In this section we prove our first improvement of the periodic decomposition theorem.
For the proof, we need some lemmas.

\begin{lemma} \label{lemma: auxiliary lemma 1 for main result 1}
Let $V\subseteq \R^d$ be a linear subspace of $\R^d$,
and let $\varphi$ and $\psi$ be line polynomials in non-parallel directions $\bv_1 \in \Z^d$ and $\bv_2 \in \Z^d$, respectively.
Also, assume that 
$$
\langle \bv_1,\bv_2 \rangle \cap V = \{ \bo \}.
$$
% $\bv_1$ is not in the subspace generated by $\bv_2$ and $V$, and
% $\bv_2$ is not in the subspace generated by $\bv_1$ and $V$.
If $c' \in \C^{\Z^d}$ is a $V$-periodic function annihilated by $\psi$,
then there exists $c \in \C^{\Z^d}$ such that the following conditions hold: 
\begin{itemize}
    \item $\varphi c = c'$,
    \item $\psi c =0$, and
    \item $c$ is $V$-periodic.
\end{itemize}
Moreover, if $c' \in \Z^{\Z^d}$ and $\varphi$ is a difference polynomial, then the coefficients of $c$ can be chosen to be integers, that is, $c \in \Z^{\Z^d}$.
\end{lemma}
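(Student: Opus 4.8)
The plan is to construct $c$ by ``integrating'' $c'$ along the fibers parallel to $\bv_1$. Since $\varphi$ is a line polynomial in direction $\bv_1$, the equation $\varphi c = c'$ decouples into independent one-dimensional linear recurrences, one on each coset of $\Z\bw_1$, where $\bw_1$ is the primitive integer vector parallel to $\bv_1$. Writing $\varphi$ as a univariate Laurent polynomial in the fiber variable, let $r\ge 1$ be the spread of $\supp(\varphi)$ along the fiber (the difference between its extreme positions); the extreme coefficients are non-zero, so the recurrence is a two-sided one of order $r$, solvable in both directions once $r$ consecutive values are fixed. I would solve every fiber recurrence using one common choice of initial data, and then check the two invariance properties.

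First I would reduce to the case where $V$ is a \emph{rational} subspace, replacing $V$ by the real span of $V\cap\Q^d$; this changes neither $V$-periodicity nor the hypothesis $\langle\bv_1,\bv_2\rangle\cap V=\{\bo\}$. The key preliminary step is then to produce an integer linear functional $\lambda\colon\Z^d\to\Z$ that vanishes on $V$ and on $\bv_2$ but has $\lambda(\bv_1)\neq 0$. Such $\lambda$ exists precisely because $\bv_1\notin V\oplus\langle\bv_2\rangle$: if $\bv_1=\bw+t\bv_2$ with $\bw\in V$, then $\bw=\bv_1-t\bv_2\in\langle\bv_1,\bv_2\rangle\cap V=\{\bo\}$, forcing $\bv_1$ and $\bv_2$ to be parallel, contrary to assumption. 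Using $\lambda$ I define the slab $R=\{\bu\in\Z^d : 0\le\lambda(\bu)< r\,|\lambda(\bw_1)|\}$; a short counting argument shows that $R$ meets each $\bv_1$-fiber in exactly $r$ consecutive lattice points. Setting $c=0$ on $R$ and extending by the recurrence $\varphi c=c'$ along each fiber yields a well-defined $c\in\C^{\Z^d}$ satisfying $\varphi c=c'$ everywhere.

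Both remaining conditions follow from one observation: if $h\in\C^{\Z^d}$ is annihilated by $\varphi$ and vanishes on $R$, then $h\equiv 0$, since $R$ already contains $r$ consecutive points of every $\bv_1$-fiber and an order-$r$ homogeneous recurrence with non-zero extreme coefficients propagates $r$ consecutive zeros to the whole fiber. For $\psi c=0$ I set $h=\psi c$; then $\varphi h=\psi(\varphi c)=\psi c'=0$, so $h$ is $\varphi$-annihilated, and because $\lambda(\bv_2)=0$ the slab $R$ is a union of full $\bv_2$-lines, so each value of $\psi c$ on $R$ is a combination of values of $c$ at points of the same $\lambda$-level, all lying in $R$; hence $h$ vanishes on $R$ and $h\equiv 0$. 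For $V$-periodicity I fix $\bv\in V\cap\Q^d$ with $\tau^{k\bv}(c')=c'$ for a suitable $k$, and set $h=\tau^{k\bv}(c)-c$; then $\varphi h=\tau^{k\bv}(c')-c'=0$, and since $\lambda(\bv)=0$ the translation preserves $R$, so $h$ vanishes on $R$ and $h\equiv 0$, giving $\tau^{k\bv}(c)=c$.

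The hard part is meeting all three requirements at once, and the device that resolves it is the single functional $\lambda$ that kills both $V$ and $\bv_2$ while detecting $\bv_1$; this is exactly where the hypothesis $\langle\bv_1,\bv_2\rangle\cap V=\{\bo\}$ together with non-parallelism is indispensable. The integrality addendum is then immediate: when $\varphi$ is a difference polynomial its extreme fiber coefficients are $\pm 1$, so each step of the recurrence solves for a new value by integer operations, and starting from the integer data $c'\in\Z^{\Z^d}$ with zero initial conditions on $R$ keeps $c$ integer-valued.
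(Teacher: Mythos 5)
Your proof is correct and follows essentially the same strategy as the paper's: prescribe zero initial values on a region meeting every $\bv_1$-fiber in exactly spread-many consecutive points, extend uniquely along the fibers by the recurrence $\varphi c = c'$, and then obtain both $\psi c = 0$ and $V$-periodicity by observing that each candidate obstruction ($\psi c$, respectively $\tau^{k\bv}(c)-c$) is annihilated by $\varphi$ and vanishes on the initial region, hence is identically zero. The only difference is bookkeeping: the paper coordinatizes $\Z^d$ by cosets modulo $\Z[\bv_1,\bv_2,\bb_1,\ldots,\bb_k]$ and takes the initial region $a_1\in[0,n)$, whereas your slab $R$ cut out by an integer functional $\lambda$ vanishing on $V$ and $\bv_2$ but not on $\bv_1$ plays exactly that role (and makes the invariance of the initial region under the relevant translations arguably more transparent, at the cost of the same implicit monomial normalization of $\psi$ that the paper also uses).
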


\begin{proof}
% Let $\bv_1$ and $\bv_2$ be directions of $\varphi$ and $\psi$, respectively.
Without loss of generality we may assume that $\varphi = \alpha_0 + \alpha_1 X^{\bv_1}+\ldots+\alpha_nX^{n\bv_1}$ for some positive integer $n$ such that $\alpha_0,\alpha_n \neq 0$.
Let $\{\bb_1, \ldots, \bb_k\} \subseteq \Z^d$ be a base of $V\cap\Q^d$.

The space $\Z^d$ is partitioned into cosets modulo 
$$
% \Z[\bv_1,\bv_2,\bb_1,\ldots,\bb_k] 
\langle \bv_1,\bv_2, \bb_1,\ldots,\bb_k \rangle
= 
\{a_1 \bv_1 + a_2\bv_2+b_1\bb_1+\ldots+b_k\bb_k \mid a_1,a_2,b_1,\ldots,b_k \in \Z \}.
$$
Let us fix a point $\bz_{\Lambda} \in \Lambda$ for each such coset $\Lambda$.
Now, we have
% Let us fix an arbitrary $\bz \in \Z^d$ and consider the coset
$\Lambda = \bz_{\Lambda} + \langle \bv_1,\bv_2, \bb_1,\ldots,\bb_k \rangle$.
Let us denote 
$$
\Lambda[a_1,a_2,b_1,\ldots,b_k] = \bz_{\Lambda} +  a_1 \bv_1 + a_2\bv_2+b_1\bb_1+\ldots+b_k\bb_k.
$$
Since by the assumption
$\langle \bv_1,\bv_2 \rangle \cap V = \{ \bo \}$,
each element of $\Lambda$ has a unique expression as $\Lambda[a_1,a_2,b_1,\ldots,b_k]$.
In other words, if $$\Lambda[a_1,a_2,b_1,\ldots,b_k] = \Lambda[a_1',a_2',b_1',\ldots,b_k'],$$ then
$a_1=a_1',a_2=a_2',b_1=b_1',\ldots,b_k=b_k'$.

The equation $\varphi c=c'$ is satisfied in an arbitrary coset $\Lambda$ if and only if
\begin{equation}\label{equation:recurrence}
\alpha_n c_{\Lambda[a_1-n,a_2,b_1,\ldots,b_k]} + \ldots 
% + \alpha_1 c_{\Lambda[a_1-1,a_2,b_1,\ldots,b_k]} 
+ 
\alpha_0 c_{\Lambda[a_1,a_2,b_1,\ldots,b_k]} = c'_{\Lambda[a_1,a_2,b_1,\ldots,b_k]} 
\end{equation}
for all $a_1,a_2,b_1,\ldots,b_k \in \Z$.
% and for every coset $\Lambda_{\bz}$.
Let us define $c_{\Lambda[a_1,a_2,b_1,\ldots,b_k]} = 0$ for $a_1 \in [0,n)$.
The rest of $c$ is then determined by the linear recurrence of Equation \eqref{equation:recurrence} such that $\varphi c = c'$.

Let us show that if $f$ is a line polynomial in direction $\bw \in \{\bv_2,\bb_1,\ldots, \bb_k\}$ and annihilates $c'$, then it also annihilates $c$:
Since $fc'=0$, we have
\begin{equation}\label{eq:recurrence2}
\varphi (fc) = f( \varphi c) = fc' = 0.
\end{equation}
We defined $c_{\Lambda[a_1,a_2,b_1,\ldots,b_k]} = 0$ for all $a_1 \in [0,n)$ and $a_2,b_1,\ldots,b_k \in \Z$. Thus, it follows that also
$(f c)_{\Lambda[a_1,a_2,b_1,\ldots,b_k]} = 0$ for all $a_1 \in [0,n)$ and $a_2,b_1,\ldots,b_k \in \Z$ and hence the linear recurrence \eqref{eq:recurrence2} implies that $(f c)_{\Lambda[a_1,a_2,b_1,\ldots,b_k]}=0$ for all $a_1,a_2,b_1,\ldots,b_k \in \Z$.
This holds in every coset $\Lambda$ and hence $f c = 0$.

% Since $\psi c' = 0$, we have
% $$
% \varphi (\psi c) = \psi (\varphi c) = \psi c'= 0,
% $$
% % It follows that $\psi c =0$. 
% and since we defined $c_{\Lambda[a_1,a_2,b_1,\ldots,b_k]} = 0$ for all $a_1 \in [0,n)$ and $a_2,b_1,\ldots,b_k \in \Z$, it follows that also
% $(\psi c)_{\Lambda[a_1,a_2,b_1,\ldots,b_k]} = 0$ for all $a_1 \in [0,n)$ and $a_2,b_1,\ldots,b_k \in \Z$ and hence the above recurrence implies that $(\psi c)_{\Lambda[a_1,a_2,b_1,\ldots,b_k]}=0$ for all $a_1,a_2,b_1,\ldots,b_k \in \Z$.
% This holds in every coset $\Lambda$ and hence $\psi c = 0$.

% By a similar argument as above, since $c'$ is periodic in direction $\bb_i$ for each $i \in \{1,\ldots,k\}$, that is, annihilated by a difference polynomial in direction $\bb_i$, it follows that also $c$ is annihilated by this same difference polynomial and hence periodic in direction $\bb_i$.
% % Thus, $c$ is $V$-periodic.
% Thus, $c$ is periodic in direction $\bv$ for every $\bv \in V \cap \Q^d$ and hence by the definition $c$ is $V$-periodic.

Now, it follows that $\psi c = 0$ since $\psi$ annihilates $c'$ and has direction $\bv_2$.
Also, since
$c'$ is periodic in direction $\bb_i$ for each $i \in \{1,\ldots,k\}$, that is, annihilated by a difference polynomial in direction $\bb_i$, it follows that also $c$ is annihilated by this same difference polynomial and hence periodic in direction $\bb_i$.
Thus, $c$ is periodic in direction $\bv$ for every $\bv \in V \cap \Q^d$ and hence by the definition $c$ is $V$-periodic.

Finally, assume that $c' \in \Z^{\Z^d}$ and $\varphi$ is a difference polynomial, that is, $\varphi = X^{ t \bv_1} - 1$ for some $t \in \Z$.
Now, Equation \eqref{equation:recurrence} turns into form
$$
c_{\Lambda[a_1 - t,a_2,b_1 , \ldots , b_k]} - c_{\Lambda[a_1,a_2,b_1 , \ldots , b_k]} = c'_{\Lambda[a_1,a_2,b_1 , \ldots , b_k]}.
$$
% Since the values of $c'$ are integers, it follows that the values of $c$ can be chosen to be integers as well.
Since the values of $c'$ are integers, it follows that the values of $c$ as defined above are integers as well.
% \qed
\end{proof}

\begin{lemma} \label{lemma: auxiliary lemma 2 for main result 1}
Let $V\subseteq \R^d$ be a linear subspace of $\R^d$.
Assume that $c \in \C^{\Z^d}$ is $V$-periodic and that there exist $m \geq 1$ and
line polynomials $\varphi_1,\ldots,\varphi_m$ in pairwise non-parallel directions
$\bv_1,\ldots,\bv_m \in \Z^d$,
respectively,
such that $c$ is annihilated by the product $\varphi_1 \cdots \varphi_m$.
Also, assume that for all $i,j \in \{ 1 , \ldots , m \}, i \neq j$ it holds that
$$
\langle \bv_i,\bv_j \rangle \cap V = \{ \bo\}.
$$
% that $\bv_i$ is not in the subspace generated by $\bv_j$ and $V$, and
% $\bv_j$ is not in the subspace generated by $\bv_i$ and $V$.
Then there exist functions $c_1,\ldots,c_m \in \C^{\Z^d}$ such that
$$
c=c_1+\ldots+c_m
$$
where each $c_i$ is annihilated by $\varphi_i$ and
each $c_i$ is $V$-periodic.

Moreover, if the line polynomials $\varphi_1,\ldots,\varphi_m$ are difference polynomials and $c \in \Z^{\Z^d}$, then we may choose $c_1,\ldots,c_m \in \Z^{\Z^d}$.
% instead of $c_1,\ldots,c_m \in \C^{\Z^d}$.
\end{lemma}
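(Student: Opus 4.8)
The plan is to argue by induction on $m$, using Lemma~\ref{lemma: auxiliary lemma 1 for main result 1} as the engine that lets me ``integrate'' one factor at a time. The base case $m=1$ is immediate: here $c$ is already annihilated by $\varphi_1$ and $V$-periodic, so $c_1 = c$ works and there are no pair conditions to check. For the inductive step I assume the statement for $m-1$ and peel off the last factor $\varphi_m$.

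First I would form $c' := \varphi_m c$. Multiplication by the fixed polynomial $\varphi_m$ commutes with every difference polynomial $X^{k\bv}-1$, so if $c$ is periodic in a direction $\bv \in V \cap \Q^d$ then so is $c'$; hence $V$-periodicity of $c$ is inherited by $c'$. Moreover $\varphi_1\cdots\varphi_{m-1}\, c' = \varphi_1\cdots\varphi_m\, c = 0$, so $c'$ is annihilated by $\varphi_1\cdots\varphi_{m-1}$. The directions $\bv_1,\ldots,\bv_{m-1}$ are still pairwise non-parallel and still satisfy $\langle \bv_i,\bv_j\rangle \cap V = \{\bo\}$, so the induction hypothesis applies to $c'$ and yields $c' = d_1 + \cdots + d_{m-1}$ with each $d_i$ annihilated by $\varphi_i$ and $V$-periodic.

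Next I would integrate each $d_i$ back through $\varphi_m$. For each $i \in \{1,\ldots,m-1\}$ I apply Lemma~\ref{lemma: auxiliary lemma 1 for main result 1} with the subspace $V$ and the non-parallel line polynomials $\varphi_m$ (direction $\bv_m$) and $\varphi_i$ (direction $\bv_i$); the hypothesis $\langle \bv_m,\bv_i\rangle \cap V = \{\bo\}$ is exactly what the lemma requires. Since $d_i$ is $V$-periodic and annihilated by $\varphi_i$, the lemma produces $c_i$ with $\varphi_m c_i = d_i$, with $\varphi_i c_i = 0$, and with $c_i$ being $V$-periodic. I then define $c_m := c - (c_1 + \cdots + c_{m-1})$, which is $V$-periodic as a combination of $V$-periodic functions, and I compute $\varphi_m c_m = \varphi_m c - \sum_{i<m} \varphi_m c_i = c' - \sum_{i<m} d_i = 0$, so that $c_m$ is annihilated by $\varphi_m$. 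This gives the desired decomposition $c = c_1 + \cdots + c_m$ with each $c_i$ annihilated by $\varphi_i$ and $V$-periodic.

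For the integer refinement, if the $\varphi_i$ are difference polynomials and $c \in \Z^{\Z^d}$, then $c' = \varphi_m c \in \Z^{\Z^d}$, the induction hypothesis supplies integer-valued $d_i$, and the integral part of Lemma~\ref{lemma: auxiliary lemma 1 for main result 1} (available precisely because $\varphi_m$ is a difference polynomial) supplies integer-valued $c_i$; hence $c_m \in \Z^{\Z^d}$ as well. I expect the only real subtlety to be the bookkeeping in this integration step --- in particular matching the roles of $\varphi$ and $\psi$ in Lemma~\ref{lemma: auxiliary lemma 1 for main result 1} to $\varphi_m$ and $\varphi_i$, and verifying that the subspace condition and the $V$-periodicity transfer correctly through the peeling --- rather than any genuinely hard estimate, since the substantive analytic content has already been isolated in Lemma~\ref{lemma: auxiliary lemma 1 for main result 1}.
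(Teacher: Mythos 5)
Your proof is correct and takes essentially the same route as the paper's own: induction on $m$, applying the induction hypothesis to $\varphi_m c$, integrating each summand via Lemma~\ref{lemma: auxiliary lemma 1 for main result 1} (with $\varphi = \varphi_m$ and $\psi = \varphi_i$, justified by $\langle \bv_m, \bv_i\rangle \cap V = \{\bo\}$), and defining $c_m$ as the remainder. The integer refinement is likewise handled exactly as in the paper, through the ``moreover'' part of Lemma~\ref{lemma: auxiliary lemma 1 for main result 1}.
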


\begin{proof}
We prove the claim by induction on $m$.
The case $m=1$ is clear by choosing $c_1=c$.
Let us assume then that $m >1$ and that the claim holds for $m-1$.
The function $\varphi_m c$
is annihilated by $\varphi_1\cdots \varphi_{m-1}$ and hence by the induction hypothesis there exist functions $c_1', \ldots , c_{m-1}' \in \C^{\Z^d}$ such that
$$
\varphi_m c = c_1' + \ldots + c_{m-1}'
$$
where each $c_i'$ is annihilated by $\varphi_i$ and $V$-periodic.
By Lemma \ref{lemma: auxiliary lemma 1 for main result 1} for each $i \in \{ 1, \ldots , m-1 \}$ there exists $c_i \in \C^{\Z^d}$ such that $\varphi_m c_i = c_i'$ and $\varphi_i c_i = 0$,
and $c_i$ is $V$-periodic.
Set $c_m = c-c_1 - \ldots - c_{m-1}$.
Then $c=c_1+\ldots+c_m$ and
\begin{align*}
\varphi_m c_m &= \varphi_m c - \varphi_m c_1 - \ldots - \varphi_m c _{m-1} \\
&= c_1'+\ldots+c_{m-1}' -c_1' - \ldots - c_{m-1}' \\ &=0.
\end{align*}
Finally, since $c$ and all $c_1, \ldots, c_{m-1}$ are $V$-periodic, also $c_m$ is $V$-periodic.

The ``moreover'' part follows from the ``moreover'' part of Lemma \ref{lemma: auxiliary lemma 1 for main result 1}.
% \qed
\end{proof}

\noindent
The following lemma is adapted from the proof of Theorem 3 in \cite{DLT_invited_jarkko}.

\begin{lemma}[\cite{DLT_invited_jarkko}]\label{lemma: auxiliary lemma 3 for main result 1}
% [Adapted from the proof of Theorem3 in \cite{DLT_invited_jarkko}]
    Let $V \in \R^d$ be a proper linear subspace of $\R^d$ ($\dim(V) \leq d-1$) and let $c \in \C^{\Z^d}$.
    Assume that $c$ has a periodizer $g$ such that $\supp(g) \cap V = \{\bo\}$.
    Then $c$ has also an annihilator $f$ such that $\supp(f) \cap V = \{\bo\}$.
\end{lemma}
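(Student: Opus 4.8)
The plan is to produce the annihilator in the simplest possible form, namely as $f=(X^{\bw}-1)g$ for a single well-chosen period $\bw$ of the strongly periodic function $gc$. First note that since $g$ periodizes $c$, the function $gc$ is strongly periodic, so its set of periods is a subgroup of $\Z^d$ containing $d$ linearly independent vectors, i.e.\ a sublattice $L$ of full rank $d$. For every $\bw \in L$ we have $(X^{\bw}-1)(gc)=0$, and hence $(X^{\bw}-1)g$ annihilates $c$. It therefore suffices to find a single $\bw \in L$ for which $f=(X^{\bw}-1)g$ satisfies $\supp(f)\cap V=\{\bo\}$. Recall also that the hypothesis $\supp(g)\cap V=\{\bo\}$ forces $\bo\in\supp(g)$, so $g_{\bo}\neq 0$.

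Since $\supp((X^{\bw}-1)g)\subseteq \supp(g)\cup(\supp(g)+\bw)$ and $\supp(g)$ meets $V$ only at $\bo$, the idea is to choose $\bw$ so that the translated support $\supp(g)+\bw$ misses $V$ altogether. Concretely I would ask for $\bw+\bu\notin V$ for every $\bu\in\supp(g)$; equivalently, $\bw$ should avoid the set $\bigcup_{\bu\in\supp(g)}(V-\bu)$, a finite union of affine subspaces all parallel to $V$. If such a $\bw$ is found, then $\bo$ is the only candidate point of $\supp(f)$ lying in $V$, and a direct check of the coefficient at $\bo$ shows it equals $-g_{\bo}\neq 0$: the competing term $g_{-\bw}$ vanishes since $-\bw\in\supp(g)$ would give $\bw+(-\bw)=\bo\in V$, contradicting the choice of $\bw$. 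Hence $\bo$ survives, $f\neq 0$, and $\supp(f)\cap V=\{\bo\}$ exactly. (In particular $\bw\neq\bo$, since $V-\bo=V\ni\bo$ lies in the union we avoid.)

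The main point is thus to show that a full-rank lattice $L$ cannot be swallowed by the finite union $\bigcup_{\bu}(V-\bu)$, and this is where I expect the real content to sit. Since $V$ is a proper subspace I would fix a non-zero normal $\vec n \perp V$; each $V-\bu$ then lies in the hyperplane $\{\bx : \inner{\vec n}{\bx}=-\inner{\vec n}{\bu}\}$, so the whole union is contained in the zero set of the single non-zero polynomial $Q(\bx)=\prod_{\bu\in\supp(g)}\bigl(\inner{\vec n}{\bx}+\inner{\vec n}{\bu}\bigr)$. A full-rank lattice is Zariski dense in $\R^d$, so $Q$ cannot vanish on all of $L$; picking $\bw\in L$ with $Q(\bw)\neq 0$ forces $\inner{\vec n}{\bw+\bu}\neq 0$, hence $\bw+\bu\notin V$, for every $\bu\in\supp(g)$, exactly as required. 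The only step beyond bookkeeping is this non-vanishing claim, which follows from the standard fact that a non-zero polynomial cannot vanish on a full-rank lattice (reduce by a linear change of coordinates to the statement that a non-zero polynomial cannot vanish on all of $\Z^d$).
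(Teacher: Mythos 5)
Your proof is correct. Note that the paper contains no proof of this lemma at all --- it is imported, with citation, from the proof of Theorem~3 in \cite{DLT_invited_jarkko} --- so the comparison is against that source rather than against anything in the text. Your argument is essentially the standard one behind that citation: since $gc$ is strongly periodic, its periods contain a full-rank lattice $L$, any $\bw\in L$ makes $f=(X^{\bw}-1)g$ an annihilator of $c$, and one chooses $\bw$ so that $\supp(g)+\bw$ misses $V$; the hypothesis $\supp(g)\cap V=\{\bo\}$ then confines $\supp(f)\cap V$ to $\{\bo\}$, and your verification that $g_{-\bw}=0$ (so the coefficient of $f$ at $\bo$ is $-g_{\bo}\neq 0$) correctly rules out the one possible cancellation --- a detail that is easy to overlook and that you handle cleanly. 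The only place you use heavier machinery than necessary is the selection of $\bw$: instead of Zariski density of a full-rank lattice, it suffices to pick $\bt\in L$ with $\inner{\vec{n}}{\bt}\neq 0$ (such $\bt$ exists, since otherwise $\vec{n}$ would be orthogonal to $d$ linearly independent vectors) and set $\bw=N\bt$ with $N$ so large that $N\,|\inner{\vec{n}}{\bt}|>\max_{\bu\in\supp(g)}|\inner{\vec{n}}{\bu}|$, which is the elementary scaling implicit in the cited proof; your polynomial-non-vanishing route is nonetheless valid, since a non-zero polynomial indeed cannot vanish on $A\Z^d$ for invertible $A$.
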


Now, we are ready to prove our first improvement of the periodic decomposition theorem.

\begin{theorem}\label{thm: main result 1}
    Let $c$ be a $d$-dimensional configuration and let $k \in \{1, \ldots , d \}$.
    Assume that for every $(k-1)$-dimensional linear subspace $V \subseteq \R^d$ the configuration $c$ has a periodizer $f$ such that
    $\supp(f) \cap V = \{\bo\}$.
    Then there exist $k$-periodic functions $c_1,\ldots,c_m \in \Z^{\Z^d}$ such that
    $$
    c=c_1+\ldots+c_m.
    $$
\end{theorem}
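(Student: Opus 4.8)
My plan is to turn the periodizer hypothesis into a single annihilator of product form and then run the decomposition machinery of Lemmas \ref{lemma: auxiliary lemma 1 for main result 1} and \ref{lemma: auxiliary lemma 2 for main result 1}. Since the case $k=d$ is already available \cite{DLT_invited_jarkko}, I would assume $k\le d-1$. Fixing any $(k-1)$-dimensional subspace $V$, the hypothesis gives a periodizer $f$ of $c$ with $\supp(f)\cap V=\{\bo\}$; Lemma \ref{lemma: auxiliary lemma 3 for main result 1} upgrades this to an annihilator with the same support property, and Theorem \ref{thm: special annihilator}, applied at $\bo\in\supp(f)$, then yields an annihilator $(X^{\bv_1}-1)\cdots(X^{\bv_m}-1)$ whose directions $\bv_i$ are parallel to nonzero support points of $f$ and hence satisfy $\bv_i\notin V$. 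With the directions now fixed, I would re-choose $V$ generically --- possible exactly because $\dim V+2=k+1\le d$ --- so that moreover $\langle\bv_i,\bv_j\rangle\cap V=\{\bo\}$ for all $i\neq j$, which are precisely the conditions on the directions demanded by Lemma \ref{lemma: auxiliary lemma 2 for main result 1}.

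The decomposition step is the delicate one. Lemma \ref{lemma: auxiliary lemma 2 for main result 1} with $\dim V=k-1$ does output exactly $k$-periodic functions --- each piece is $V$-periodic and annihilated by one difference polynomial $X^{\bv_i}-1$ with $\bv_i\notin V$ --- but all of its pieces share the single subspace $V$, so it can only be applied to a function that is already $V$-periodic. A general configuration satisfying our hypothesis need not be periodic at all (already for $k=1$ the hypothesis amounts essentially to the existence of a non-trivial periodizer), and since a sum of $V$-periodic functions is again $V$-periodic, the components we are seeking cannot share a common $(k-1)$-dimensional period space. Thus Lemma \ref{lemma: auxiliary lemma 2 for main result 1} cannot be applied to $c$ directly, and the components must instead be produced with individually chosen period spaces.

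To get around this I would peel the components off one direction at a time, working in $\Z^{\Z^d}$ rather than among configurations. The function $\varphi_m c=(X^{\bv_m}-1)c$ is annihilated by the shorter product $(X^{\bv_1}-1)\cdots(X^{\bv_{m-1}}-1)$ and inherits the hypothesis, because any periodizer of $c$ also periodizes $\varphi_m c$ with unchanged support; by induction on the number of factors it decomposes into $k$-periodic pieces, and Lemma \ref{lemma: auxiliary lemma 1 for main result 1} lets me integrate each piece back against $\varphi_m$ while retaining its $k$ periods, after which I set $c_m=c-\sum_{i<m}c_i$. The step I expect to be hardest is controlling this last remainder: in Lemma \ref{lemma: auxiliary lemma 2 for main result 1} a common $V$ forces $c_m$ to inherit the extra periods automatically, but here there is no such subspace, so I would have to check that $c_m$ is $\bv_m$-periodic, that it still satisfies the hypothesis, and then raise it to a $k$-periodic function --- presumably by passing to the quotient by its period $\bv_m$ and invoking the statement one dimension lower with parameter $k-1$. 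Coordinating the generic choice of subspaces with the finite set of directions, keeping the coefficients integral through the integration steps of Lemma \ref{lemma: auxiliary lemma 1 for main result 1}, and this treatment of the remainder are where the essential work lies.
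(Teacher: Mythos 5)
You correctly diagnose the central obstruction---Lemma \ref{lemma: auxiliary lemma 2 for main result 1} can only be applied to a function that is already $V$-periodic, which $c$ is not---but your workaround does not close the gap. Your peeling induction is on the number of difference-polynomial factors: decompose $\varphi_m c$ into $k$-periodic pieces, then integrate each piece back against $\varphi_m = X^{\bv_m}-1$ via Lemma \ref{lemma: auxiliary lemma 1 for main result 1}. That integration step cannot be set up. Lemma \ref{lemma: auxiliary lemma 1 for main result 1} requires, for the piece $c_i'$, a line-polynomial annihilator $\psi$ in a direction $\bv_2$ and a subspace $V$ of preserved periods with $\langle \bv_m,\bv_2\rangle \cap V = \{\bo\}$; but the inductive conclusion only tells you that $c_i'$ is $W_i$-periodic for some \emph{uncontrolled} $W_i \in \G_k$, with no annihilator data attached. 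If $\bv_m \in W_i$ (nothing prevents this), then taking $\bv_2 \in W_i$ forces every $(k-1)$-dimensional $V \subseteq W_i$ to meet the $2$-dimensional space $\langle \bv_m,\bv_2\rangle \subseteq W_i$ nontrivially, while taking $\bv_2 \notin W_i$ requires an annihilator of $c_i'$ in a direction outside its period space, which $W_i$-periodicity alone does not supply; and integrating against $X^{\bv_m}-1$ genuinely destroys the period in direction $\bv_m$ (a $\bv$-periodic right-hand side integrates to a non-periodic ``staircase''), so the integrated piece can come out with fewer than $k$ periods. This is exactly the coordination problem you flag as ``where the essential work lies,'' and it is not solved by your sketch. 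The remainder treatment is also unsupported: to invoke the statement with parameter $k-1$ on the quotient by $\bv_m$ you would need $c_m = c - \sum_{i<m} c_i$ to inherit the periodizer hypothesis, and the natural attempt---multiplying a periodizer $f$ of $c$ by periodizers $g$ of the subtracted pieces---fails the support condition, since $\supp(f)\cap V = \supp(g)\cap V = \{\bo\}$ does not imply $\supp(fg)\cap V = \{\bo\}$ (cancellations can create support points in $V$ and even remove $\bo$). Finally, your generic re-choice of $V$ so that $\langle \bv_i,\bv_j\rangle\cap V = \{\bo\}$ is idle: since $c$ is not $V$-periodic for any choice of $V$, no genericity makes Lemma \ref{lemma: auxiliary lemma 2 for main result 1} applicable to $c$.

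The paper's proof organizes the induction the other way around: on $k$ itself, not on the number of annihilator factors. Given the hypothesis for $k$-dimensional subspaces, the induction hypothesis (via the containment of $(k-1)$-dimensional subspaces in $k$-dimensional ones) first yields $c = e_1+\ldots+e_l$ with each $e_i$ $V_i$-periodic, $V_i \in \G_k$ distinct; Lemma \ref{lemma: auxiliary lemma 2 for main result 1} is then applied \emph{to each piece $e_i$} with $V = V_i$, its actual period space---this is the idea missing from your outline. An annihilator of $e_i$ in directions avoiding $V_i$ is manufactured by combining an annihilator of $c$ with $\supp\cap V_i=\{\bo\}$ (the hypothesis applied to $V_i$, upgraded by Lemma \ref{lemma: auxiliary lemma 3 for main result 1} and Theorem \ref{thm: special annihilator}) with difference polynomials $X^{\bw_j}-1$ that kill the other pieces $e_j$. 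Crucially, the span condition $\langle\bv_j,\bv_{j'}\rangle\cap V_i=\{\bo\}$ is obtained not by transversality of a generic subspace---which is dimensionally impossible in the step where $\dim V_i = k = d-1$, since a $2$-dimensional span must then meet $V_i$---but by a minimal-length-annihilator argument that rewrites the product, replacing parallel or $V_i$-degenerate pairs of factors by a single difference polynomial using the periodicity of intermediate functions. Each resulting piece is $V_i$-periodic and $\bv_j$-periodic with $\bv_j\notin V_i$, hence $(k+1)$-periodic, with integrality preserved by the ``moreover'' parts of Lemmas \ref{lemma: auxiliary lemma 1 for main result 1} and \ref{lemma: auxiliary lemma 2 for main result 1}.
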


\begin{proof}
We prove the claim by induction on $k$.
If $k=1$, then the assumption implies that the configuration $c$ has a non-trivial periodizer and hence by Theorem \ref{thm: decomposition theorem} it is a sum of periodic functions.

Let $k \in \{1,\ldots,d-1\}$ and assume
that the claim holds for $k$.
Let us prove that the claim holds then also for $k+1$.
So, assume that for all $V \in \G_{k}$ the configuration $c$ has a periodizer $f$ such that $\supp(f) \cap V = \{\bo\}$.
Since every $(k-1)$-dimensional subspace is contained in a $k$-dimensional subspace, there exists such periodizer, in particular, also for all $V \in \G_{k-1}$.
Thus, by the induction hypothesis there exist $k$-periodic functions $e_1,\ldots,e_l \in \Z^{\Z^d}$ such that
$$
c = e_1 + \ldots + e_l.
$$
For each $i\in\{1,\ldots,l\}$, let 
$V_i \in \G_k$ be such that $e_i$ is $V_i$-periodic.
We may assume that for all $i \neq j$ we have $V_i \neq V_j$ since the sum of two $V$-periodic functions is also $V$-periodic.
Indeed, if for some $i \neq j$ we have $V_i=V_j$, then we replace the functions $e_i$ and $e_j$ by the function $e_i+e_j$ which is also $V_i$-periodic.

% Consider $i=1$.
Let us fix an arbitrary $i \in \{1,\ldots,l\}$. By the assumption the configuration $c$ has a periodizer $f$ such that
$$
\supp(f) \cap V_i = \{\bo\}.
$$
By Lemma \ref{lemma: auxiliary lemma 3 for main result 1} we may assume that $f$ is an annihilator of $c$ and hence
by Theorem \ref{thm: special annihilator}
we conclude that $c$ has an annihilator
$$
(X^{\bu_1}-1) \cdots (X^{\bu_{r}}-1)
$$
such that $\bu_j \not \in V_i$ for all $j \in \{ 1, \ldots , r\}$ and the vectors $\bu_1,\ldots,\bu_r$ are pairwise non-parallel.

For each $j \in \{1,\ldots , l\}\setminus \{i\}$, let $\bw_j \in V_j$ be such that $\bw_j \not \in V_i$ and $e_j$ is annihilated by $X^{\bw_j}-1$.
Since $c=e_1+\ldots+e_l$ is annihilated by the product
$(X^{\bu_1}-1) \cdots (X^{\bu_r}-1)$
and each $e_j$ is annihilated by $X^{\bw_j}-1$, it follows that $e_i$ is annihilated by the polynomial
$$
(X^{\bu_1}-1) \cdots (X^{\bu_r}-1) \prod_{j \in \{1,\ldots , l\}\setminus \{i\}} (X^{\bw_j}-1).
$$

Thus, we have seen that $e_i$ is annihilated by a polynomial of the form
$$
(X^{\bv_1}-1) \cdots (X^{\bv_s}-1)
$$
where $\bv_j \not \in V_i$ for each $j \in \{1,\ldots,s\}$.
Let us assume that $s$ is minimal in the sense that if a product
$
(X^{\bv'_1}-1) \cdots (X^{\bv'_t}-1)
$
annihilates $e_i$ such that $\bv_j \not \in V_i$ for each $j \in \{1,\ldots,t\}$, then $t\geq s$.
We claim that the vectors $\bv_1,\ldots,\bv_s$ are pairwise non-parallel and $ \langle \bv_j, \bv_{j'} \rangle \cap V_i =  \{ \bo\}$ for all $j,j'\in \{1,\ldots,s\},j\neq j'$.

(Note that in the above annihilator we may replace any $\bv_{j}$ with $p \bv_{j}$ for any integer $p$ and the obtained polynomial is still an annihilator of $e_i$.
This is due to the fact that if a function is $\bv$-periodic, then it is also $p \bv$-periodic for any integer $p$.)

If the vectors $\bv_1,\ldots,\bv_s$ are not pairwise non-parallel,
then for some $j,j'\in \{1,\ldots,s\}, j \neq j'$ 
% the vectors $\bv_j$ and $\bv_{j'}$ are parallel and hence 
the product
$\varphi=(X^{\bv_j}-1)(X^{\bv_{j'}}-1)$ is a line polynomial in direction $\bv_j$.
Since $\varphi$ annihilates the function 
$$
\frac{(X^{\bv_1}-1) \cdots (X^{\bv_s}-1)}{(X^{\bv_j}-1)(X^{\bv_{j'}}-1)}
e_i,
$$
it is periodic in direction $\bv_j$, that is, annihilated by $X^{p\bv_j}-1$ for some non-zero $p \in \Z$.
Thus, $e_i$ is annihilated by 
$$
(X^{p\bv_j}-1) \cdot
\frac{(X^{\bv_1}-1) \cdots (X^{\bv_s}-1)}{(X^{\bv_j}-1)(X^{\bv_{j'}}-1)}
$$
which is a product of $s-1$ non-trivial difference polynomials.
This is a contradiction with the minimality of $s$ and hence the vectors $\bv_1,\ldots,\bv_s$ must be pairwise non-parallel.

Let us then show that $\langle\bv_j, \bv_{j'} \rangle \cap V_i =  \{ \bo\}$ for all $j,j'\in \{1,\ldots,s\},j\neq j'$.
Assume on the contrary that there exist $j,j' \in \{1,\ldots , s\},j \neq j'$ such that
$\langle\bv_j , \bv_{j'} \rangle \cap V_i \neq \{\bo\}$.
Then there exist integers $p, p'$ such that $p' \bv_{j'} =  p \bv_j + \bv$ for some $\bv \in V_i \setminus \{\bo\}$.
The function $e_i$ is periodic in direction $\bv$ and hence it is $k \bv$-periodic for some non-zero integer $k$.
We may assume that $k=1$.
If this is not the case, then
we replace $ \bv$ by $k  \bv$ and the integers $p$ and $p'$ are replaced by $kp$ and $kp'$, respectively.
Now, $e_i$ is $\bv$-periodic.
Since $\bv_j,\bv_{j'} \not \in V_i$ we have $p,p' \neq 0$.
Now, we replace the term $X^{\bv_{j'}}-1$ in the annihilator 
$(X^{\bv_1}-1) \cdots (X^{\bv_s}-1)$ of $e_i$ by $X^{p \bv_{j} + \bv}-1$.
Let us denote the obtained annihilator of $e_i$ by $g$.
Since $e_i$ is $\bv$-periodic, also the function
$$
e=\frac{g}{X^{p \bv_{j} + \bv}-1} e_i
$$ 
is $\bv$-periodic, that is, $X^{\bv}e=e$.
Since it is annihilated by $X^{p \bv_{j} + \bv}-1$, we have
$$
0=(X^{p \bv_{j} + \bv}-1)e = X^{p \bv_{j} + \bv} e -e = X^{p \bv_{j}}e - e= (X^{p \bv_{j}}-1)e
$$
and hence
it is also annihilated by $X^{p \bv_j}-1$.
Consequently, we replace the term $X^{p \bv_{j} + \bv}-1$ in $g$ by $X^{p \bv_j}-1$.
Finally, the term $(X^{p \bv_j}-1)(X^{\bv_j}-1)$ is replaced by $X^{q \bv_j}-1$ for suitable $q \in \Z$.
Again, we get a contradiction with the minimality of $s$ by obtaining an annihilator of $e_i$ which is a product of $s-1$ non-trivial difference polynomials.

Thus, by Lemma \ref{lemma: auxiliary lemma 2 for main result 1} there exist $V_i$-periodic functions $c_{1},\ldots,c_{s}$ such that for each $j \in \{1,\ldots,s\}$ the function $c_j$ is annihilated by $X^{\bv_j}-1$, that is, $\bv_j$-periodic and
$$
e_i = c_{1} + \ldots + c_{s}.
$$
So, each $c_j$ is $L (\{\bv_j \} \cup V_i)$-periodic and hence $(k+1)$-periodic since $\bv_j \not \in V_i$. 

This same reasoning works for any $i \in \{1,\ldots,l\}$. 
So, we conclude that each $e_i$ is a sum of $(k+1)$-periodic functions and hence $c=e_1 + \ldots + e_l$ is a sum of $(k+1)$-periodic functions.
% \qed
\end{proof}

% \noindent
Note that also the converse of the above theorem holds.
In other words, if
$c_1,\ldots,c_m \in \Z^{\Z^d}$ 
are $k$-periodic functions, then
their sum
$
c=c_1+\ldots+c_m,
$
has for all $V \in \G_{k-1}$ a periodizer $f$ such that $\supp(f) \cap V = \{\bo\}$.
Let us state this direction as a lemma and provide a proof:
% for the claim.

% \bigskip
% \noindent
% \textbf{Claim.}
% \emph{
\begin{lemma}
Let $k \in \{1,\ldots,d\}$, and
let $c_1,\ldots,c_m \in \Z^{\Z^d}$ be $k$-periodic functions.
Then for all $V \in \G_{k-1}$ the function $c=c_1+\ldots+c_m$ has a periodizer $f$ such that 
$$
\supp(f) \cap V = \{\bo\}.
$$
\end{lemma}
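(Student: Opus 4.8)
The plan is to produce an annihilator of $c$ with the required support property, which suffices because every annihilator is a periodizer. Each summand $c_i$ is $k$-periodic, hence $V_i$-periodic for some $V_i \in \G_k$, and I may take $V_i$ to be the rational span of $k$ linearly independent integer periods of $c_i$. Since $\dim V_i = k > k-1 = \dim V$, no $V_i$ is contained in $V$. I would therefore try to choose, for each $i$, an integer period $\bw_i$ of $c_i$ with $\bw_i \notin V$, and set $f = \prod_{i=1}^m (X^{\bw_i}-1)$. Each factor $X^{\bw_i}-1$ annihilates $c_i$, so the product annihilates every $c_i$ and hence $c = c_1 + \ldots + c_m$. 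Expanding the product, $\supp(f)$ is contained in the set of subset sums $\{\sum_{i\in S}\bw_i \mid S \subseteq \{1,\ldots,m\}\}$, so the whole problem reduces to controlling where these subset sums sit relative to $V$.

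The main obstacle is exactly this: merely arranging $\bw_i \notin V$ for each $i$ is not enough, because a subset sum $\sum_{i\in S}\bw_i$ could land in $V$ even when each individual summand avoids it. To rule this out I would force all the $\bw_i$ into a common open half-space cut out by a linear functional that vanishes on $V$. The space of linear functionals on $\R^d$ vanishing on $V$ has dimension $d-(k-1) = d-k+1 \ge 1$, and for each $i$ those that in addition vanish on all of $V_i$ form a subspace of dimension $d - \dim(V+V_i) \le d-k$, which is proper because $V_i \not\subseteq V$ forces $\dim(V+V_i) \ge k$. Since a vector space over an infinite field is not a finite union of proper subspaces, I can select a single rational functional $\mu$ that vanishes on $V$ but not identically on any $V_i$. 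For each $i$ I then take an integer vector of $V_i$ on which $\mu$ is non-zero, replace it by the integer multiple that is an actual period of $c_i$ (such a multiple exists because $c_i$ is $V_i$-periodic), and adjust its sign to obtain a period $\bw_i$ with $\mu(\bw_i) > 0$.

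With this choice the argument closes quickly. For any non-empty $S$ we have $\mu(\sum_{i\in S}\bw_i) = \sum_{i\in S}\mu(\bw_i) > 0$, and since $\mu$ vanishes on $V$ this forces $\sum_{i\in S}\bw_i \notin V$; in particular every non-empty subset sum is non-zero, so the origin is contributed only by $S=\emptyset$ and appears in $f$ with coefficient $(-1)^m \ne 0$. Hence $\bo \in \supp(f)$ while no other support point lies in $V$, giving $\supp(f)\cap V = \{\bo\}$, and as $f$ annihilates $c$ it is the desired periodizer. I expect the half-space selection of $\mu$ — guaranteeing simultaneously that it kills $V$, is non-trivial on every $V_i$, and is realized by genuine periods $\bw_i$ — to be the only genuinely delicate step; the remainder is routine bookkeeping about supports of products of difference polynomials.
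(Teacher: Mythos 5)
Your proof is correct, but it takes a genuinely different route from the paper. The paper argues by induction on $m$: having found a periodizer $f'$ of $c_2+\ldots+c_m$ with $\supp(f')\cap V=\{\bo\}$, it picks a period $\bv\notin V$ of $c_1$ and multiplies by $X^{n\bv}-1$, invoking the geometric fact that for $n$ large enough the translate $n\bv+\supp(f')$ misses $V$ entirely; the cross-term problem you identify (subset sums falling into $V$) is thus handled one factor at a time by stretching periods. You instead build the annihilator $\prod_{i=1}^m(X^{\bw_i}-1)$ in one shot and control \emph{all} subset sums simultaneously with a separating linear functional $\mu$ that vanishes on $V$ and is positive on every $\bw_i$, obtained via the fact that a vector space over an infinite field is not a finite union of proper subspaces. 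Your version buys a non-inductive argument with no ``large enough $n$'' step and uses genuine periods without stretching (beyond the multiple needed to pass from a direction in $V_i$ to an actual period); the paper's version is more elementary, needing no functional-selection lemma. One small inaccuracy: you ask for a \emph{rational} functional $\mu$, but $V\in\G_{k-1}$ is an arbitrary real subspace, possibly irrational, and then no non-zero rational functional vanishing on $V$ need exist. Fortunately rationality is never used: since each $V_i$ is spanned by integer periods, any real $\mu$ not vanishing identically on $V_i$ is already non-zero on one of those integer basis periods, so simply dropping the word ``rational'' repairs the argument verbatim. The rest — the coefficient $(-1)^m$ of $X^{\bo}$ surviving because only $S=\emptyset$ has $\mu$-value zero, and every other support point having $\mu>0$ hence lying off $V$ — is sound.
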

% }

\begin{proof}
We prove the claim by induction on $m$.
Assume first that $m=1$.
Let $V \in \G_{k-1}$.
Since $c=c_1$ is $k$-periodic, it has a period vector $\bv$ such that $\bv \not \in V$.
Now, $X^{\bv}-1$ is an annihilator (and hence also a periodizer) of $c$ satisfying $\supp(X^{\bv}-1) \cap V =  \{\bo\}.$

Assume then that $m \geq 2$ and that the claim holds for $m-1$.
Let $V \in \G_{k-1}$ be arbitrary.
By the induction hypothesis the function
$$
c'=c_2+\ldots+c_m
$$
has a periodizer $f'$ such that $\supp(f') \cap V = \{\bo\}$.
% By Lemma \ref{lemma: auxiliary lemma 3 for main result 1} we may assume that $f'$ is an annihilator of $c'$.
Again, let $\bv$ be such that $c_1$ is $\bv$-periodic and $\bv \not \in V$.
Note that $c_1$ is also $n \bv$-periodic and hence $X^{n \bv}-1$ is an annihilator of $c_1$ for all $n \in \Z$.
Thus, 
$
f=(X^{n \bv}-1)f'
$
is a periodizer of $c$ for all $n$.
For large enough $n$, we have $\supp(f) \cap V = \{\bo\}$.
% \qed
\end{proof}

\begin{remark}
A linear subspace $V \subseteq \R^d$ is \emph{expansive} for a subshift $\X \subseteq \A^{\Z^d}$ if there exists $r \in \Z_+$ such that for any $c,e \in \X$ the condition
$$
c(\bu) = e(\bu) \text{ for all } \bu \in V^r
$$
implies that $c=e$ where $V^r = \{ \bu \in \Z^d \mid d(\bu,\bv) \leq r \text{ for some } \bv \in V\}$ is the set of integer vectors within distance $r$ from $V$.
A theorem by Boyle and Lind states that if every $(d-1)$-dimensional subspace is expansive for a subshift $\X \subseteq \A^{\Z^d}$, then $\X$ is finite \cite{boyle-lind}.
If in Theorem \ref{thm: main result 1} we have $k=d$, then it is easily seen that every $(d-1)$-dimensional subspace is expansive for the orbit closure $\overline{\mathcal{O}(c)}$ of $c$ and hence $\overline{\mathcal{O}(c)}$ is finite by the theorem by Boyle and Lind. 
Thus, $c$ is strongly periodic in this setting by using a similar argument as in \cite{ballier}.
This case is also stated as Corollary 1 in \cite{DLT_invited_jarkko}.
For $k \in \{1,\ldots,d-1\}$, it may be that the functions in Theorem \ref{thm: main result 1} have infinitely many distinct coefficients, that is, they are not configurations as Remark \ref{remark: periodic decomposition theorem} suggests.
\end{remark}

\subsection*{Application to translational tilings}\label{subsection: translational tilings}

A \emph{translational tiling} is a binary configuration $c \in \{0,1\}^{\Z^d}$ such that for some non-empty finite set $D \subseteq \Z^d$ and a polynomial $f_D = \sum_{\bu \in -D} X^{\bu}$ we have
$$
f_Dc = \mathbbm{1}= 1^{\Z^d}.
$$
In this case $D$ is called a \emph{tile} and $c$ is a \emph{co-tiler of $D$} or a (translational) \emph{tiling by the tile $D$}.

Define that $k$ tiles $D_1,\ldots,D_k \subseteq \Z^d$ 
% satisfying 
% $\bo \in D_i$ for each $i \in \{1,\ldots,k\}$ 
% where $k \in \{1,\ldots,d\}$ 
are \emph{independent} if
$\bo \in D_i$ for each $i \in \{1,\ldots,k\}$
and the vectors $\bv_1,\ldots,\bv_k$ are linearly independent over $\Q$ for every tuple $(\bv_1,\ldots,\bv_k) \in D_1 \setminus \{\bo\} \times \ldots \times D_k \setminus \{\bo\}$ \cite{meyerovitch2023}.
It is known that if $k$ independent tiles have a common co-tiler $c$, then $c$ is a sum of finitely many $k$-periodic functions from the set $[a,b]^{\Z^d}$ for some reals $a<b$ \cite{meyerovitch2023}.
% \end{remark}
We will prove a similar result as a corollary of Theorem \ref{thm: main result 1}.
First, consider a simple example.

\begin{example}
Consider the two tiles $D_1=\{(0,0),(1,0)\},D_2=\{(0,0),(0,1)\} \subseteq \Z^2$.
They are clearly independent.
Moreover, they have a common co-tiler $c \in\{0,1\}^{\Z^2}$ defined such that
$c(i,j)=1$ if and only if $i+j$ is even.
\end{example}

We say that $k$ polynomials $f_1,\ldots,f_k$ satisfying $\bo \in \supp(f_i)$ for each $i \in \{1,\ldots,k\}$ are independent if their supports are independent.
% in the sense of \cite{meyerovitch2023}.

\begin{corollary}\label{corollary: independent periodizers}
Let $c \in \A^{\Z^d}$ be a configuration and let $f_1,\ldots,f_k$ be $k$ periodizers of $c$ satisfying $\bo \in \supp(f_i)$ for each $i \in \{1,\ldots,k\}$ with $k \in \{1,\ldots,d\}$.
If $f_1,\ldots,f_k$ are independent, then 
$$
c=c_1 + \ldots + c_m
$$
where each $c_i \in \Z^{\Z^d}$ is $k$-periodic.
\end{corollary}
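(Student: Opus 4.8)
The plan is to deduce the corollary directly from Theorem \ref{thm: main result 1}, whose hypothesis requires that for \emph{every} $(k-1)$-dimensional linear subspace $V \subseteq \R^d$ the configuration $c$ admit a periodizer $f$ with $\supp(f) \cap V = \{\bo\}$. Since each $f_i$ is already a periodizer of $c$, it suffices to show that for any given $V \in \G_{k-1}$ at least one of the independent periodizers $f_1,\ldots,f_k$ has its support meeting $V$ only at the origin. Once this is established, Theorem \ref{thm: main result 1} applies verbatim and yields the decomposition $c = c_1 + \ldots + c_m$ into $k$-periodic integer-valued functions.

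The core of the argument is a dimension count, which I would run by contradiction. Fix $V \in \G_{k-1}$ and suppose that no $f_i$ works, i.e.\ for every $i \in \{1,\ldots,k\}$ we have $\supp(f_i) \cap V \neq \{\bo\}$. Because $\bo \in \supp(f_i)$ by assumption, this means each intersection contains some \emph{nonzero} vector $\bv_i \in (\supp(f_i) \setminus \{\bo\}) \cap V$. By the independence hypothesis, for this particular choice of $\bv_i \in \supp(f_i) \setminus \{\bo\}$ the vectors $\bv_1,\ldots,\bv_k$ are linearly independent over $\Q$. Since these are integer (hence rational) vectors, linear independence over $\Q$ upgrades to linear independence over $\R$ — the $\Q$-rank and $\R$-rank of a family of rational vectors coincide. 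But then $\bv_1,\ldots,\bv_k$ are $k$ vectors lying inside $V$ and spanning a $k$-dimensional real subspace, contradicting $\dim V = k-1$. Hence some $f_i$ must satisfy $\supp(f_i) \cap V = \{\bo\}$.

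Assembling the pieces: for each $V \in \G_{k-1}$ we have exhibited a periodizer of $c$ (namely the appropriate $f_i$) whose support meets $V$ only at $\bo$, so the hypothesis of Theorem \ref{thm: main result 1} holds, and the conclusion follows.

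There is no genuine obstacle here; the only point demanding a moment's care is the passage from $\Q$-linear independence to $\R$-linear independence, which is what makes the ``$k$ vectors in a $(k-1)$-space'' count legitimate. Everything else is a pigeonhole observation tailored precisely to the definition of independent periodizers, followed by an invocation of the already-proved Theorem \ref{thm: main result 1}.
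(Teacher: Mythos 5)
Your proposal is correct and follows essentially the same route as the paper: both fix $V \in \G_{k-1}$, argue by contradiction that each $\supp(f_i)$ would contain a nonzero vector $\bv_i \in V$, derive a dimension contradiction with the independence of $f_1,\ldots,f_k$, and then invoke Theorem \ref{thm: main result 1}. Your explicit remark that $\Q$-linear independence of rational vectors upgrades to $\R$-linear independence is a fine point the paper leaves implicit, but it does not change the argument.
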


\begin{proof}
Let us prove that for every $V \in \G_{k-1}$ the configuration $c$ has a periodizer $f$ such that $\supp(f) \cap V = \{\bo\}$.
Then the claim follows from Theorem \ref{thm: main result 1}.

Assume on the contrary that there exists a $V \in \G_{k-1}$ such that $\supp(f) \cap V  \not = \{\bo\}$ for any periodizer $f$ of $c$.
This implies that for any periodizer $f$ of $c$ we have either $\supp(f) \cap V = \emptyset$ or $|\supp(f) \cap V| \geq 2$.
Since $\bo \in \supp(f_i)$, we have $|\supp(f_i) \cap V| \geq 2$ for each $i \in \{1,\ldots,k\}$.
Thus, there exist non-zero vectors $\bv_1,\ldots,\bv_k$ such that $\bv_i \in \supp(f_i) \cap V$ for each $i \in \{1,\ldots,k\}$
and hence the vectors $\bv_1,\ldots,\bv_k$ span a linear subspace of dimension at most $k-1$.
This means that the sets $\supp(f_1),\ldots,\supp(f_k)$ are not independent and hence the polynomials $f_1,\ldots,f_k$ are not independent.
A contradiction.
% \qed
\end{proof}

We have the following immediate corollary.

\begin{corollary}
If a $d$-dimensional configuration $c \in \A^{\Z^d}$ has $d$ independent periodizers, then it is strongly periodic.
\end{corollary}

\begin{proof}
By the above corollary $c=c_1+\ldots+c_m$ where each $c_i$ is strongly periodic. Thus, $c$ is strongly periodic as a sum of finitely many strongly periodic configurations.
\end{proof}

% \noindent
For any tile $D\subseteq\Z^d$, the polynomial $f_D = \sum_{\bu \in -D}X^{\bu}$ is a periodizer of any co-tiler $c$ of $D$.
Thus, by Corollary \ref{corollary: independent periodizers}, any common co-tiler of $k$ independent tiles is a sum of finitely many $k$-periodic functions of the set $\Z^{\Z^d}$.
So, we have a similar result as the authors of \cite{meyerovitch2023} except that the $k$-periodic functions in the periodic decomposition of the co-tiler are now functions of the set $\Z^{\Z^d}$ instead of the set $[a,b]^{\Z^d}$.

\begin{corollary}\label{corollary: independent tiles}
Let $D_1,\ldots,D_k \subseteq \Z^d$ be $k$ independent tiles and let $c \in \{0,1\}^{\Z^d}$ be their common co-tiler.
Then there exist $k$-periodic functions $c_1,\ldots,c_m \in \Z^{\Z^d}$ such that
$$
c=c_1+\ldots+c_m.
$$
\end{corollary}

\begin{proof}
Follows directly from Corollary \ref{corollary: independent periodizers}.
\end{proof}

% \begin{lemma}

% \end{lemma}

\section{Periodic decomposition of sparse configurations} \label{sec: main result 2}

% \textcolor{blue}{Define sparse configurations..}
Let us denote by $C_m = \{ -m , \ldots , m\}^d$ for $m \in \N$ the discrete $d$-dimensional hypercube of size $(2m+1)^d$ centered at the origin. 
% In this paper 
Any function $c \in \C^{\Z^d}$ is called \emph{sparse} if there exists a positive integer $a$ such that
$$
|\supp(c) \cap (C_m + \bt)| \leq am
$$
for all $m \in \Z_+$ and for all $\bt \in \Z^d$.
In this case, $a$ is called a sparseness constant of $c$.

In this section we consider sparse configurations with annihilators and prove that they are sums of finitely many periodic fibers.
We prove the following theorem.

\begin{theorem} \label{thm: periodic decomposition sparse}
  Let $c$ be a sparse configuration and assume that it is annihilated by a product $\varphi_1 \cdots \varphi_n$ of line polynomials $\varphi_1,\ldots,\varphi_n$ in pairwise non-parallel directions $\bv_1,\ldots,\bv_n$, respectively.
  Then
  $$
  c= c_1 + \ldots + c_n
  $$
  where each $c_i$ is a sum of finitely many periodic $\bv_i$-fibers and $\varphi_i c_i =0$.
\end{theorem}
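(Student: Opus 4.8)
The plan is to argue by induction on the number $n$ of factors, reducing the inductive step to the structural claim that $\supp(c)$ is contained in finitely many lines whose directions lie among $\bv_1,\ldots,\bv_n$. Once that is known, I partition these lines by their direction, let $c_i$ collect the fibers of $c$ on the lines in direction $\bv_i$ (the finitely many intersection points may be assigned to any one part), and check that each $c_i$ is itself a sparse configuration with $\varphi_i c_i=0$; the base case $n=1$ then finishes by identifying each $c_i$ as a finite sum of periodic $\bv_i$-fibers.

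\emph{Base case $n=1$.} Here $\varphi_1 c=0$ with $\varphi_1$ a line polynomial in direction $\bv_1$; normalising, $\varphi_1=\alpha_0+\cdots+\alpha_N X^{N\bv_1}$ with $\alpha_0,\alpha_N\neq 0$. Each $\bv_1$-fiber of $c$ is a bi-infinite sequence over the finite alphabet $\A$ satisfying the recurrence imposed by $\varphi_1$; since $\alpha_0,\alpha_N\neq 0$ the recurrence is invertible in both directions, so a window of $N$ consecutive entries determines the whole fiber, and by finiteness of $\A$ every fiber is periodic with period bounded by a constant $P$ independent of the fiber. A nonzero periodic $\bv_1$-fiber then carries at least one nonzero value in every $P$ consecutive sites, hence at least $\beta m$ nonzero values inside $C_m$ for some $\beta>0$ and all large $m$; if infinitely many fibers were nonzero, choosing more than $a/\beta$ of them and $m$ large would violate $|\supp(c)\cap C_m|\le am$. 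Thus $c$ is a sum of finitely many periodic $\bv_1$-fibers.

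\emph{Inductive step.} Assume the result for products of fewer factors. The function $d=\varphi_n c$ is again a configuration (a fixed integer combination of finitely many values of $c$) and is sparse (its support lies in a bounded neighbourhood of $\supp(c)$), and it is annihilated by $\varphi_1\cdots\varphi_{n-1}$; by the induction hypothesis $d=\sum_{i<n}d_i$ with each $d_i$ a finite sum of periodic $\bv_i$-fibers and $\varphi_i d_i=0$. As the directions are pairwise non-parallel, every $\bv_n$-line $L$ meets each of the finitely many $\bv_i$-lines supporting the $d_i$ in at most one point, so $d|_L$ is finitely supported. On $L$ I therefore face $\varphi_n(c|_L)=d|_L$ with finitely supported right-hand side and bounded (finite-alphabet) solution $c|_L$. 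Away from $\supp(d|_L)$ the fiber $c|_L$ solves the homogeneous $\varphi_n$-recurrence and, being bounded, is periodic there; I split off the configuration $c_n$ whose fiber on each $L$ is the periodic homogeneous solution matching $c|_L$ outside a bounded window. By the counting of the base case, sparseness forces $c_n$ to be supported on finitely many $\bv_n$-lines, while the remainder $c-c_n$ has, on every $L$, support confined to a bounded neighbourhood of $\supp(d|_L)\subseteq L\cap\supp(d)$, i.e. within a bounded number of $\bv_n$-translates of the $\bv_i$-lines carrying $d$, hence on finitely many $\bv_i$-lines. This yields the structural claim, after which the partition by direction described above produces $c=c_1+\cdots+c_n$.

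The main obstacle is the control of the transient part $c-c_n$: I must show, uniformly over the infinitely many lines $L$, that once the periodic tails are removed the support of $c|_L$ does not spread far from the source $d|_L$, for otherwise the non-periodic part would smear across an unbounded strip and leave the class of finite sums of fibers. This is exactly where sparseness is indispensable, since an unbounded transient on each of infinitely many lines would accumulate more than $am$ nonzero values in $C_m$. Making this quantitative --- bounding the window length and reconciling the possibly distinct periodic tails of $c|_L$ at $\pm\infty$ --- is the technical heart, and it is also what guarantees afterwards that each assembled $c_i$ is a sparse configuration annihilated by $\varphi_i$, so that the base case applies to it.
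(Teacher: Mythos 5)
Your base case is sound and is essentially the paper's Lemma \ref{lemma: a sparse periodic configuration is a sum of finitely many fibers} (bounded period from the two-sided deterministic recurrence over a finite alphabet, then a counting argument against the sparseness constant). The inductive step, however, has a genuine gap precisely where you flag it. Splitting off ``the periodic homogeneous solution matching $c\restriction_L$ outside a bounded window'' presupposes two things you never establish: (i) that the periodic tails of $c\restriction_L$ at $+\infty$ and $-\infty$ are restrictions of one and the same periodic sequence, and (ii) a bound on the transient window that is uniform over the infinitely many lines $L$. Point (i) is not a technicality: a $\bv_n$-fiber that is eventually periodic in both directions with mismatched tails is \emph{not} a finite sum of periodic fibers --- applying $X^{p\bv_n}-1$ with $p$ the least common multiple of the two tail periods leaves a non-zero finitely supported function, which no further product of difference polynomials can annihilate --- so if even one line had mismatched tails, your $c_n$ could not be defined and the structural claim would fail; ruling this out is essentially equivalent to a large part of the theorem. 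Likewise, the sentence ``check that each $c_i$ is itself a sparse configuration with $\varphi_i c_i=0$'' after an \emph{arbitrary} assignment of the crossing points is asserted, not proved, and the arbitrariness of that assignment makes $\varphi_i c_i = 0$ implausible as stated. Since you yourself call the missing estimates ``the technical heart,'' what you have is a plausible plan with its hardest step left open, not a proof.

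It is worth seeing how the paper sidesteps the line-by-line analysis entirely, because the mechanism is different in kind: compactness of the configuration space. In the two-factor case (Lemma \ref{lemma: sparse n=2}), $e_1=\psi c$ is sparse and annihilated by $\varphi$, hence a finite sum of periodic $\bv$-fibers with some period $p\bv$; one then takes $c_1$ to be a limit of a convergent subsequence of $X^{kp\bv}c$. Continuity of $e\mapsto ge$ gives $\psi c_1=\lim X^{kp\bv}e_1=e_1$ (periodicity of $e_1$), while $\varphi c_1=\lim X^{kp\bv}e_2=0$ because the finitely many $\bu$-fibers of $e_2=\varphi c$ are translated off to infinity in the non-parallel direction $\bv$. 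Symmetrically one builds $c_2$, and then $c-c_1-c_2$ is annihilated by both $\varphi$ and $\psi$, hence is simultaneously a finite sum of $\bv$-fibers and of $\bu$-fibers, hence zero. The general step applies this lemma to limits of $X^{jk_i\bv_i}c$, which satisfy $\varphi_n e=c_i'$. Passing to the limit along a period is exactly what reconciles the two tails and eliminates any need for uniform transient bounds: the limit configuration has globally periodic fiber structure by construction. If you want to complete your argument, this compactness trick is the ingredient to import.
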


Before presenting the proof of the theorem
we state some simple observations concerning sparse configurations.

% \subsection{Properties of sparse configurations}

\begin{lemma}\label{lemma: a sum of sparse configurations is also sparse}
  Let $c_1,\ldots, c_n$ be sparse configurations and let $k_1,\ldots,k_n \in \Z$.
  Then also $k_1c_1+\ldots+k_nc_n$ is sparse.
  In particular, for a sparse configuration $c$ and a polynomial $g$ also $gc$ is sparse.
\end{lemma}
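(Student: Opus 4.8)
The plan is to prove the two claims separately, starting with the linear-combination statement and then deriving the polynomial-multiplication statement as a special case. Recall that a function $c$ is sparse with sparseness constant $a$ if $|\supp(c) \cap (C_m + \bt)| \leq am$ for every $m \in \Z_+$ and every $\bt \in \Z^d$. The key point is that this bound is uniform over all translates $\bt$, which is exactly what makes sparseness stable under the operations we care about.

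\medskip

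\noindent
\textbf{Closure under integer linear combinations.}
First I would observe that $\supp(k_1 c_1 + \ldots + k_n c_n) \subseteq \supp(c_1) \cup \ldots \cup \supp(c_n)$, since a point where the combination is non-zero must be a point where at least one summand is non-zero (and if $k_i = 0$ the corresponding term contributes nothing, which only helps). Intersecting with the box $C_m + \bt$ and applying subadditivity of cardinality over a union gives
$$
|\supp(k_1 c_1 + \ldots + k_n c_n) \cap (C_m + \bt)| \leq \sum_{i=1}^n |\supp(c_i) \cap (C_m + \bt)|.
$$
If $a_i$ is a sparseness constant for $c_i$, the right-hand side is at most $(a_1 + \ldots + a_n)m$, so $a_1 + \ldots + a_n$ is a sparseness constant for the combination. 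This bound holds uniformly in $m$ and $\bt$, so the combination is sparse.

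\medskip

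\noindent
\textbf{Closure under multiplication by a polynomial.}
For the ``in particular'' clause I would write the polynomial as $g = \sum_{i=1}^{r} g_{\bu_i} X^{\bu_i}$ with support $\{\bu_1, \ldots, \bu_r\}$. Then $gc = \sum_{i=1}^{r} g_{\bu_i}\, \tau^{\bu_i}(c)$ is an integer linear combination of translates of $c$, so by the first part it suffices to show that each translate $\tau^{\bu_i}(c)$ is sparse with the same constant as $c$. This follows because translation merely shifts the support: $\supp(\tau^{\bv}(c)) = \supp(c) + \bv$, hence
$$
|\supp(\tau^{\bv}(c)) \cap (C_m + \bt)| = |\supp(c) \cap (C_m + (\bt - \bv))| \leq am,
$$
where the last inequality uses the sparseness of $c$ with the shifted center $\bt - \bv$ — and here the uniformity of the sparseness bound over \emph{all} translates is essential. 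Thus each $\tau^{\bu_i}(c)$ is sparse, so $gc$ is sparse by the first part.

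\medskip

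\noindent
I do not anticipate a genuine obstacle here: the whole argument is a bookkeeping exercise on the counting bound, and the only subtlety — which is worth flagging explicitly rather than treating as routine — is that translation-invariance of the sparseness estimate is what lets us pass from $c$ to its translates without degrading the constant. The additivity of the constant under sums and the factor-$r$ growth under multiplication by $g$ are both harmless, since sparseness only requires \emph{some} finite constant to exist.
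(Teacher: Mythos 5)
Your proof is correct and follows essentially the same route as the paper: bounding the support of the combination by the union of supports (giving sparseness constant $a_1+\ldots+a_n$), then writing $gc$ as an integer linear combination of translates of $c$. The only difference is that you spell out explicitly why a translate $\tau^{\bv}(c)$ is sparse via the shifted center $\bt-\bv$, a step the paper treats as immediate.
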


\begin{proof}
  There exist $a_1,\ldots,a_n$ such that
  $$
  |\supp(c_i) \cap (C_m + \bt)| \leq a_i m
  $$
  for all $m\in \Z_+$ and $\bt \in \Z^d$.
  Since 
  $$
  \supp(k_1c_1+\ldots+k_nc_n) \subseteq \cup_{i=1}^n \supp(c_i),
  $$
  we have
  $$
  |\supp(k_1c_1+\ldots+k_nc_n) \cap (C_m + \bt)| \leq (a_1+\ldots+a_n) m
  $$
  for all $m\in \Z_+$ and $\bt \in \Z^d$.
  Thus,
  $k_1c_1+\ldots+k_nc_n$ is sparse.

For a polynomial $g=\sum_{i=1}^n k_i X^{\bu_i}$ we have
$gc = k_1 c_1 + \ldots + k_n c_n$ where each $c_i = X^{\bu_i} c$ is sparse as a translation, \emph{i.e.}, a monomial multiplication of $c$.
  Thus, $gc$ is sparse by the first part.
% \qed
\end{proof}

\begin{lemma}\label{lemma: orbit closure of a sparse configuration}
Every element of the orbit closure $\overline{\mathcal{O}(c)}$ of a sparse configuration $c$ is also sparse.
\end{lemma}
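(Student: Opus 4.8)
The plan is to exploit two facts: that the defining sparseness inequality is invariant under translation with the \emph{same} constant $a$, and that membership in the orbit closure is a purely local condition, since $\overline{\mathcal{O}(c)}$ consists exactly of those $c'$ with $\Lang{c'} \subseteq \Lang{c}$. Together these let the single sparseness constant of $c$ serve for every element of the closure.

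First I would record that every translate of $c$ is sparse with the same constant. If $|\supp(c)\cap(C_m+\bt)|\le am$ holds for all $m\in\Z_+$ and all $\bt\in\Z^d$, then for any $\bs\in\Z^d$ the translate $\tau^{\bs}(c)$ has support $\supp(c)+\bs$, so
$$
|\supp(\tau^{\bs}(c))\cap(C_m+\bt)| = |\supp(c)\cap(C_m+(\bt-\bs))|\le am,
$$
and the bound is untouched. Hence the constant $a$ is uniform over the whole orbit $\mathcal{O}(c)$.

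Now fix $c'\in\overline{\mathcal{O}(c)}$ together with an arbitrary $m\in\Z_+$ and $\bt\in\Z^d$; the goal is to show $|\supp(c')\cap(C_m+\bt)|\le am$. Consider the $C_m$-pattern $p = \tau^{-\bt}(c')\restriction_{C_m}$, which is precisely the restriction of $c'$ to the window $C_m+\bt$. Since $p\in\Patt{c'}{C_m}\subseteq\Lang{c'}\subseteq\Lang{c}$, the configuration $c$ contains $p$, so there is some $\bs\in\Z^d$ with $\tau^{\bs}(c)\restriction_{C_m}=p$. On $C_m$ the functions $\tau^{-\bt}(c')$ and $\tau^{\bs}(c)$ then agree pointwise, so they have the same number of non-zero entries there; translating the counts gives
$$
|\supp(c')\cap(C_m+\bt)| = |\supp(\tau^{-\bt}(c'))\cap C_m| = |\supp(\tau^{\bs}(c))\cap C_m| \le am,
$$
where the last inequality is the uniform bound from the previous step. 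As $m$ and $\bt$ were arbitrary, $c'$ is sparse with sparseness constant $a$.

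The argument is essentially immediate once one observes that the quantity $|\supp(\,\cdot\,)\cap(C_m+\bt)|$ depends only on the restriction of the configuration to the finite window $C_m+\bt$, and that the sparseness bound is window-independent. The only point requiring a little care is that the \emph{same} constant $a$ transfers to $c'$, which is why I make the translation-invariance with a fixed $a$ explicit at the start rather than merely noting that each individual translate of $c$ is sparse. Beyond this bookkeeping there is no real obstacle.
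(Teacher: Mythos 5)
Your proof is correct and takes essentially the same route as the paper's: every $C_m$-pattern of an element of $\overline{\mathcal{O}(c)}$ occurs somewhere in $c$, so the count $|\supp(\,\cdot\,)\cap(C_m+\bt)|$ transfers with the same sparseness constant $a$. Your write-up merely makes explicit the translation bookkeeping that the paper's one-line argument leaves implicit.
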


\begin{proof}
Let $e \in \overline{\mathcal{O}(c)}$.
For all $m \in \Z_+$ any $C_m$-pattern of $e$ is also a $C_m$-pattern of $c$ and hence for any $\bt \in \Z^d$ there exists $\bt' \in \Z^d$ such that
$$
\supp(e) \cap (C_m + \bt) = \supp(c) \cap (C_m + \bt').
$$
Thus, if $c$ is sparse, then also $e$ is sparse.
% \qed
\end{proof}

\begin{remark}
To make the above lemma work it is important to define sparse configurations in the way we did.
For example, the weaker class of configurations $c$ having a constant $a$ such that
$$
|\supp(c) \cap C_m| \leq am
$$
for all $m \in \Z_+$ does not satisfy the above lemma.
Indeed, consider a configuration $c \in \{0,1\}^{\Z^2}$
such that $c(\bu)=1$ if $\bu \in C_m + (2^m,0)$ for some $m \in \Z_+$ and otherwise $c(\bu)=0$.
Now, $|\supp(c) \cap C_{m}| = O(m)$ and hence there exists $a$ such that
$$
|\supp(c) \cap C_{m}| \leq a m
$$
for all $m \in \Z_+$.
However, the constant configuration $1^{\Z^2}$ is in the orbit closure $\overline{\mathcal{O}(c)}$ and does not satisfy the condition.
\end{remark}

The following lemma is Theorem \ref{thm: periodic decomposition sparse} in the case $n=1$.

\begin{lemma} \label{lemma: a sparse periodic configuration is a sum of finitely many fibers}
  Let $c$ be a sparse configuration and assume that it is periodic in direction $\bv$ for some non-zero $\bv$.
  Then $c$ is a sum of finitely many periodic $\bv$-fibers.
\end{lemma}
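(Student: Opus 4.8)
The plan is to show that only finitely many lines in direction $\bv$ carry non-zero values of $c$; once this is established the lemma follows at once. Writing $\bv_0$ for the primitive integer vector with $\Q\bv = \Q\bv_0$, the grid $\Z^d$ is partitioned into the lines $L = \bu + \Z\bv_0$, and $c$ is the (a priori infinite) sum of its $\bv$-fibers $c_L$, where $c_L$ agrees with $c$ on $L$ and vanishes elsewhere. Since $c$ is periodic in direction $\bv$, say $\bw = k\bv \in \Z^d$ is a non-zero period, each $c_L$ inherits it: $\tau^{\bw}(c_L) = c_L$, because translation by $\bw$ maps each line $L$ to itself and $c$ is $\bw$-periodic. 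Hence every $c_L$ is a periodic $\bv$-fiber, and it remains only to prove that $c_L = 0$ for all but finitely many $L$.

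First I would record the two quantitative inputs and play them off against each other. Writing $\bw = K\bv_0$ with $K \in \Z_+$, $\bw$-periodicity forces each non-zero fiber $c_L$ to take a non-zero value at exactly one residue class of lattice points of $L$ modulo $K\bv_0$, so among any $K$ consecutive lattice points of $L$ at least one is non-zero; thus an \emph{active} line (one with $c_L \neq 0$) carries non-zero values with density $\geq 1/K$ along itself. Sparseness, on the other hand, bounds the total number of non-zero values of $c$ inside any cube $C_m + \bt$ by $am$. Crucially, distinct lines are disjoint cosets, so the non-zero values contributed by different active lines are counted without overlap.

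The main obstacle is that an active line may merely clip a corner of a cube, contributing too few non-zero values for a naive count in $C_m + \bt$ to reach a contradiction. I would circumvent this with two concentric cubes. Any line through the inner cube $C_m + \bt$ necessarily traverses a segment of at least $2m/\|\bv_0\|_\infty - 1$ consecutive lattice points inside the outer cube $C_{2m} + \bt$, so by the density estimate it carries at least roughly $m/(K\|\bv_0\|_\infty)$ non-zero values there. Summing over the $N$ active lines meeting the inner cube and using disjointness, the outer cube contains at least $N \cdot \Theta(m/(K\|\bv_0\|_\infty))$ non-zero values; comparing with the sparseness bound $2am$ for $C_{2m} + \bt$ gives, for all large $m$, a bound $N \leq b$ with $b$ depending only on $a$, $K$ and $\bv_0$, and in particular independent of $m$ and $\bt$.

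Finally I would let the cube grow. If $c$ had infinitely many active lines, then taking $\bt = \bo$ and $m \to \infty$ the number of active lines meeting $C_m$ would tend to infinity, since each fixed line eventually enters every sufficiently large cube; this contradicts the uniform bound $N \leq b$ from the previous step. Hence only finitely many fibers $c_L$ are non-zero, and $c$ is their finite sum of periodic $\bv$-fibers, as claimed. The only delicate point is the corner-clipping issue, which the concentric-cube device resolves; the remaining estimates are routine.
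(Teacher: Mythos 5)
Your proof is correct and follows essentially the same route as the paper's: both play the linear growth of support along periodic $\bv$-fibers against the linear sparseness bound inside growing cubes, yours via the concentric cubes $C_m+\bt \subseteq C_{2m}+\bt$ and a uniform bound on the number of active lines, the paper's by choosing $am+1$ distinct fibers all meeting a base cube $C_n$ and expanding to $C_{n+km}$. One phrasing slip: a non-zero periodic fiber takes non-zero values on \emph{at least} one residue class of its line modulo $K\bv_0$, not \emph{exactly} one, but since you only use the resulting lower bound (one non-zero point among any $K$ consecutive lattice points), nothing breaks.
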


\begin{proof}
Without loss of generality we may assume that $c$ is $\bv$-periodic by replacing $\bv$ by $m\bv$ for suitable $m \in \Z$.

  Assume on the contrary that $c$ is not a sum of finitely many periodic $\bv$-fibers, that is, it contains infinitely many distinct non-zero periodic $\bv$-fibers.
  Let $m \in \Z_+$ be such that $\bv \in C_m$,
  % Moreover, for any fiber $e$ let $n(e)$ be such that $\supp(e) \cap C_{n(e)} \neq \emptyset$.
  and let $e_1, \ldots , e_t$ be $t$ distinct non-zero fibers of $c$ where $t = am + 1$ and $a$ is a sparseness constant of $c$.
  Let $n \in \Z_+$ be such that $\supp(e_j) \cap C_n \neq \emptyset$ for each $j \in \{1,\ldots, t\}$.
  
  Since each $e_j$ is $\bv$-periodic, we have $|\supp(e_j) \cap C_{n + km}| > k$ for each $j \in \{1, \ldots , t \}$ and for all $k \in \Z_+$.
  Let $k = an + 1$.
  Then
  \begin{align*}
  |\supp(c) \cap C_{n+km}| & \geq \sum_{j=1}^t |\supp(e_j) \cap C_{n + km}| \\
  &> k t
  = (an+1)(am+1)=anam+an+am+1 \\
  &
  > an + aanm + am
  =a(n + km).
  \end{align*}
  This is a contradiction with $a$ being a sparseness constant of $c$.
  % \qed
\end{proof}

\noindent
In fact, also the converse of the above lemma holds, that is, if a configuration is a sum of finitely many periodic $\bv$-fibers, then it is sparse and periodic in direction $\bv$.
% , that is, a sum of finitely many periodic $\bv$-fibers is sparse and periodic.
Indeed, clearly any $\bv$-fiber $e$ is sparse since 
$$
\supp(e) \cap(C_m + \bt) \subseteq (\bu + \Q \bv) \cap(C_m + \bt)
$$
for some $\bu$
and $|(\bu+ \Q \bv) \cap(C_m + \bt) | \leq 2m+1$.
By Lemma \ref{lemma: a sum of sparse configurations is also sparse} a sum of finitely many sparse configurations is sparse and hence a sum of finitely many periodic $\bv$-fibers is sparse and periodic.
Thus, a configuration is sparse and periodic in direction $\bv$ if and only if it is a sum of finitely many periodic $\bv$-fibers.

The following lemma is Theorem \ref{thm: periodic decomposition sparse} in the case $n=2$.

\begin{lemma} \label{lemma: sparse n=2}
% THEOREM \ref{thm: periodic decomposition sparse} in the case $n=2$.
Let $c$ be a sparse configuration and assume that it is annihilated by the product $\varphi \psi$ of two line polynomials $\varphi$ and $\psi$ in non-parallel directions $\bv$ and $\bu$, respectively.
Then
$$
c= c_1 + c_2
$$
where $c_1$ is a sum of finitely many periodic $\bv$-fibers and $\varphi c_1 = 0$, and
$c_2$ is a sum of finitely many periodic $\bu$-fibers and $\psi c_2 = 0$.
\end{lemma}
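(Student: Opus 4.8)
The plan is to reduce both factors of the annihilator to fiber sums, reformulate the goal as the construction of a single sparse configuration, and then carry out a ``discrete integration''; the difficulty will be entirely in keeping the integrated piece bounded and sparse.

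First I would set $e := \varphi c$ and $c' := \psi c$. Both are configurations (each takes only finitely many integer values), and by Lemma~\ref{lemma: a sum of sparse configurations is also sparse} both are sparse. Since $\varphi \psi c = 0$, we have $\psi e = \varphi c' = 0$, so $e$ is annihilated by $\psi$ (a line polynomial in direction $\bu$) and $c'$ is annihilated by $\varphi$ (a line polynomial in direction $\bv$). As a sparse configuration annihilated by a line polynomial is periodic in that direction, $e$ is periodic in direction $\bu$ and $c'$ is periodic in direction $\bv$, so Lemma~\ref{lemma: a sparse periodic configuration is a sum of finitely many fibers} applies: $e$ is a sum of finitely many periodic $\bu$-fibers, supported on finitely many lines parallel to $\bu$, and symmetrically $c'$ is a sum of finitely many periodic $\bv$-fibers.

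Next I would reformulate the whole problem as follows: \emph{it suffices to produce a sparse configuration $c_2$ with $\varphi c_2 = e$ and $\psi c_2 = 0$.} Indeed, given such a $c_2$, set $c_1 := c - c_2$. Then $c_1$ is again a sparse configuration by Lemma~\ref{lemma: a sum of sparse configurations is also sparse}, and $\varphi c_1 = \varphi c - e = 0$ while $\psi c_2 = 0$ by construction. Hence $c_1$ is a sparse configuration periodic in direction $\bv$ and $c_2$ is a sparse configuration periodic in direction $\bu$, and a second application of Lemma~\ref{lemma: a sparse periodic configuration is a sum of finitely many fibers} expresses each as a sum of finitely many periodic fibers, giving exactly $c = c_1 + c_2$ with the required properties. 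A solution $c_2$ \emph{in $\C^{\Z^d}$} with $\varphi c_2 = e$ and $\psi c_2 = 0$ already exists by Lemma~\ref{lemma: auxiliary lemma 1 for main result 1} applied with $V = \{\bo\}$ (the hypotheses $\psi e = 0$ and $\langle \bv,\bu\rangle \cap \{\bo\} = \{\bo\}$ hold trivially), and any two solutions differ by a function annihilated by both $\varphi$ and $\psi$.

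The main obstacle is precisely that the solution handed to us by Lemma~\ref{lemma: auxiliary lemma 1 for main result 1} need not be bounded, let alone sparse: solving the recurrence $\varphi c_2 = e$ by marching along the $\bv$-direction from a vanishing initial slab typically leaves a nonzero constant ``background'' on a half-space, so the naive solution has quadratic support growth and fails to be a configuration. The heart of the proof is thus to correct this particular solution by a homogeneous solution $h$ — a function annihilated by both $\varphi$ and $\psi$, hence strongly periodic on the planes parallel to $\langle \bu, \bv\rangle$ — in such a way that the corrected $c_2$ becomes a sparse configuration. The natural way I would organize this is to pass to $\bu$-periodic functions (functions on the quotient by the $\bu$-direction), where $e$ becomes a \emph{finitely supported} datum and $\varphi$ remains a line polynomial in direction $\bv$, and to solve $\varphi c_2 = e$ there by exact division with finitely supported quotient, the resulting $c_2$ then having only finitely many nonzero $\bu$-fibers. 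The essential point, and the step I expect to require the most care, is that sparseness of $c$ is exactly what forces the relevant discrete antiderivative to be finitely supported (equivalently, $c_2$ to be bounded): an unbounded background would propagate an $\bu$-periodic layer across infinitely many $\bu$-fibers and violate the sparseness estimate $|\supp(c) \cap (C_m + \bt)| \leq am$. Once $c_2$ is known to be a sparse configuration, the reduction of the previous paragraph completes the argument.
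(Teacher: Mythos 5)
Your first step and your reduction are both correct, and they track the paper closely: setting $e = \varphi c$ and $c' = \psi c$, noting both are sparse configurations annihilated by the complementary line polynomial and hence finite sums of periodic fibers, and then observing that it suffices to produce a \emph{single} sparse configuration $c_2$ with $\varphi c_2 = e$ and $\psi c_2 = 0$. The paper does essentially this (it builds $c_1$ and $c_2$ symmetrically and checks $c - c_1 - c_2 = 0$ by showing it is simultaneously a finite sum of $\bv$-fibers and of $\bu$-fibers); your one-sided reduction is a valid streamlining. The problem is in how you produce $c_2$: this is exactly where the paper's only real idea lives, and your proposal replaces it with an assertion.

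The step ``solve $\varphi c_2 = e$ by exact division with finitely supported quotient'' is a genuine divisibility condition, not something that follows from the setup: already for $\varphi = X^{\bv} - 1$, the equation $\varphi x = \tilde e$ with finitely supported data $\tilde e$ has a finitely supported solution only if certain weighted sums of the data vanish (the one-variable analogue is that $(X-1) \mid p(X)$ iff $p(1)=0$), and the marching construction gives no control over this obstruction. Your proposed mechanism --- that a nonzero homogeneous tail would ``propagate a $\bu$-periodic layer across infinitely many $\bu$-fibers and violate $|\supp(c) \cap (C_m + \bt)| \leq am$'' --- does not go through as stated, because the sparseness hypothesis constrains $c$, not the antiderivative: the naive solution $x_0$ differs from $c$ by an arbitrary, possibly unbounded, function annihilated by $\varphi$, so an unbounded half-space background in $x_0$ produces no direct contradiction with the sparseness of $c$. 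There is also a smaller conflation: functions on the quotient are $\bu$-periodic, which for a general line polynomial $\psi$ (say $\psi = X^{\bu} - 2$) is neither necessary nor sufficient for $\psi c_2 = 0$; this particular point is repairable (if $c_2$ is a finite sum of $\bu$-fibers, each $\bv$-fiber of $\psi c_2$ is finitely supported and annihilated by $\varphi$, hence zero), but your write-up does not address it. The paper fills the main gap by a compactness argument that avoids integration entirely: choose $q$ so that $e$ is $q\bu$-periodic, and let $c_2$ be the limit of a converging subsequence of $c, X^{q\bu}c, X^{2q\bu}c, \ldots$; then $c_2$ is sparse by Lemma \ref{lemma: orbit closure of a sparse configuration}, $\varphi c_2 = \lim X^{t_i q\bu} e = e$ by the periodicity of $e$, and $\psi c_2 = \lim X^{t_i q\bu} (\psi c) = 0$ because $\psi c$ is supported on finitely many lines in direction $\bv$, which leave every finite window under translation along the non-parallel direction $\bu$. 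You should either adopt this limit construction or supply a genuine proof of the divisibility claim; as written, the heart of the proof is missing.
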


\begin{proof}
Consider the configuration $e_1=\psi c$.
It is sparse by Lemma \ref{lemma: a sum of sparse configurations is also sparse}.
Moreover, $\varphi e_1 = \varphi \psi c =0$ and hence $e_1$ is periodic in direction $\bv$.
Thus, by Lemma \ref{lemma: a sparse periodic configuration is a sum of finitely many fibers} the configuration $e_1$ is a sum of finitely many periodic $\bv$-fibers.
Similarly, we conclude that the configuration $e_2 = \varphi c$ is a sum of finitely many periodic $\bu$-fibers.

Let $p\in \Z \setminus \{0\}$ be such that $e_1$ is $p \bv$-periodic.
Let $c_1$ be a limit of a converging subsequence of the sequence
% By compactness of the configuration space every sequence of configurations has a converging subsequence and hence there exist numbers $0<k_1<k_2<k_3<...$ such that
% the 
$$
c, X^{p\bv}c, X^{2p\bv}c, X^{3p\bv}c, \ldots
$$
of translated copies of $c$ by multiples of $p \bv$.
By compactness of the configuration space such subsequence exists.
Since $c_1 \in \overline{\mathcal{O}(c)}$ and $c$ is sparse, by Lemma \ref{lemma: orbit closure of a sparse configuration} also $c_1$ is sparse.
Let $k_1<k_2<k_3<\ldots$ be such that
$c_1 = \lim_{i \to \infty}X^{k_ip \bv}c$.
% Since the function $c \mapsto gc$ is a continuous in the configuration space for any polynomial $g$, 
We have
$$
\varphi c_1 
= \varphi \lim_{i \to \infty} X^{k_ip \bv}c=  \lim_{i \to \infty} X^{k_ip \bv} \varphi c = \lim_{i \to \infty} X^{k_ip \bv} e_2
=0.
$$
Above we used the fact that the function $e \mapsto g e$ is a continuous function in the topology for a configuration $e$ and polynomial $g$.
The final equality holds because $e_2$ is a sum of finitely many $\bu$-fibers that are moved by $k_i p \bv$-translations arbitrarily far away from the origin and hence disappear when we take the limit.
Moreover, we have
$$
\psi c_1 
=
\psi \lim_{i \to \infty} X^{k_ip \bv}c
=
\lim_{i \to \infty} X^{k_ip \bv} \psi c
=
\lim_{i \to \infty} X^{k_ip \bv} e_1
=
e_1 .
% = 
% \psi c.
$$
Again, we used the continuity of the function $e \mapsto g e$.
The final equality holds because $e_1$ is $p\bv$-periodic.
% The equality $$\lim_{i \to \infty} X^{k_ip \bv} e_1
% =
% e_1 $$ holds because $e_1$ is $p \bv$-periodic.

Similarly, we take $q \in \Z \setminus \{0\}$ such that $e_2$ is $q \bu$-periodic.
Then we define
$$
c_2 = \lim_{i \to  \infty} X^{t_i q \bu} c
$$
for a suitable sequence $t_1<t_2<t_3<\ldots$.
Again, since $c_2 \in \overline{\mathcal{O}(c)}$ and $c$ is sparse, by Lemma \ref{lemma: orbit closure of a sparse configuration} also $c_2$ is sparse.
By similar arguments as above, we have 
$$
\psi c_2 = \psi \lim_{i \to  \infty} X^{t_i q \bu} c
=
\lim_{i \to  \infty} X^{t_i q \bu} \psi c
=\lim_{i \to  \infty} X^{t_i q \bu} e_1 = 0
$$
and 
$$
\varphi c_2 
=
\varphi \lim_{i \to  \infty} X^{t_i q \bu} c
=
\lim_{i \to  \infty} X^{t_i q \bu} \varphi c
=
\lim_{i \to  \infty} X^{t_i q \bu} e_2
=
e_2.
% = \varphi c.
$$

Let us show that $c=c_1+c_2$.
% We do this by showing that $c-c_1-c_2=0$.
We have
$$
\varphi (c-c_1-c_2) = \varphi c -\varphi c_1 - \varphi c_2
= e_2 -0 -e_2=0.
$$
Hence, $c-c_1-c_2$ is periodic in direction $\bv$.
By Lemma \ref{lemma: a sum of sparse configurations is also sparse} it is sparse since $c$, $c_1$ and $c_2$ are all sparse.
It follows by Lemma \ref{lemma: a sparse periodic configuration is a sum of finitely many fibers} that $c-c_1-c_2$ is a sum of finitely many periodic $\bv$-fibers.
Similarly,
$$
\psi (c-c_1-c_2) = \psi c - \psi c_1 -\psi c_2 = e_1-e_1-0= 0
$$
and hence $c-c_1-c_2$ is a sum of finitely many periodic $\bu$-fibers.
So, $c-c_1-c_2$ is both a sum of finitely many periodic $\bv$-fibers and a sum of finitely many periodic $\bu$-fibers.
Since $\bv$ and $\bu$ are non-parallel, it follows that
$c-c_1-c_2=0$.

We have seen that 
$$
c=c_1+c_2
$$
where $\varphi c_1 = 0$ and $\psi c_2 =0$.
Moreover, both $c_1$ and $c_2$ are sparse.
Thus, $c_1$ is a sum of finitely many periodic $\bv$-fibers and $c_2$ is a sum of finitely many periodic $\bu$-fibers by Lemma \ref{lemma: a sparse periodic configuration is a sum of finitely many fibers}.
The claim follows.
% \qed
\end{proof}

Now, we are ready to prove Theorem \ref{thm: periodic decomposition sparse}.

\bigskip
\noindent
\emph{Proof of Theorem \ref{thm: periodic decomposition sparse}.}
The proof is by induction on $n$.
  First, let us consider the case $n=1$.
  So, $c$ is periodic in direction $\bv_1$.
  Then by Lemma \ref{lemma: a sparse periodic configuration is a sum of finitely many fibers} it is a sum of finitely many periodic $\bv_1$-fibers.

  Assume then that $n > 1$ and that the claim holds for $n-1$.
  Consider the configuration
  $$
  c' = \varphi_n c.
  $$
  It is sparse and annihilated by $\varphi_1 \cdots \varphi_{n-1}$.
  By the induction hypothesis
  $$
  c' = c_1' + \ldots + c_{n-1}'
  $$
  where each $c_i'$ is a sum of finitely many periodic $\bv_i$-fibers and annihilated by $\varphi_i$.

So, for any $i \in \{1,\ldots,n-1\}$ the configuration $c'_i$ is periodic in direction $\bv_i$, that is, $k_i \bv_i$-periodic for some $k_i \in \Z \setminus \{0\}$.
Consider the sequence
$$
c',X^{k_i \bv_i}c', X^{2k_i \bv_i}c', X^{3k_i \bv_i}c', \ldots.
$$
It converges to $c_i'$ and hence $c_i' \in \overline{\mathcal{O}(c')}$.
Let $e$ be a limit of a converging subsequence of the sequence 
$$
c,X^{k_i \bv_i}c, X^{2k_i \bv_i}c, X^{3k_i \bv_i}c, \ldots.
$$ 
By compactness of the configuration space such subsequence exists.

We have $\varphi_n e = c_i'$.
Since $c$ is sparse and $e \in \overline{\mathcal{O}(c)}$, by Lemma \ref{lemma: orbit closure of a sparse configuration} also $e$ is sparse.
Moreover, we have $\varphi_i \varphi_n e = 0$ since $\varphi_i c_i' = 0$.
By Lemma \ref{lemma: sparse n=2} we have
$$
e=e_i+e_n
$$
where $e_i$ is a sum of finitely many periodic $\bv_i$-fibers and $\varphi_i e_i=0$, and $e_n$ is a sum of finitely many $\bv_n$-fibers and $\varphi_n e_n=0$.
It follows that $\varphi_n e_i = \varphi_n e = c_i'$.

Now, we choose $c_i=e_i$ for $i \in \{1,\ldots,n-1\}$ and
$c_n = c - c_1 - \ldots - c_{n-1}$.
  Clearly, $c= c_1 + \ldots + c_n$. Moreover, we have
  \begin{align*}
  \varphi_n c_n &= \varphi_n c -\varphi_n c_1 - \ldots - \varphi_nc_{n-1}\\
  &=c' - c_1' -\ldots - c_{n-1}' \\ &= 0.
  \end{align*}
Since for each $i \in \{1,\ldots,n-1\}$ the configuration $c_i$ is sparse as a sum of finitely many periodic $\bv_i$-fibers and $c$ is sparse, also $c_n= c - c_1 - \ldots - c_{n-1}$ is sparse by Lemma \ref{lemma: a sum of sparse configurations is also sparse}.
% as a sum of finitely many sparse configurations.
  Thus, by Lemma \ref{lemma: a sparse periodic configuration is a sum of finitely many fibers} the configuration $c_n$ is a sum of finitely many periodic $\bv_n$-fibers.
\qed

\bigskip
% \noindent
Theorem \ref{thm: periodic decomposition sparse} together with Theorem \ref{thm: special annihilator} yields the following corollary.

\begin{corollary} \label{corollary: sparse configuration with annihilators is a sum of finitely many periodic fibers}
  Let $c$ be a sparse configuration and assume that it has a non-trivial annihilator.
  Then it is a sum of finitely many periodic fibers.
\end{corollary}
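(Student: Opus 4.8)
The plan is to obtain the conclusion by feeding the structured annihilator guaranteed by Theorem \ref{thm: special annihilator} into Theorem \ref{thm: periodic decomposition sparse}. Since $c$ has a non-trivial annihilator $f$, its support is non-empty, so I may fix any $\bu \in \supp(f)$. Applying Theorem \ref{thm: special annihilator} to this $\bu$ produces pairwise non-parallel vectors $\bv_1,\ldots,\bv_m \in \Z^d$ such that $c$ is annihilated by the product $(X^{\bv_1}-1)\cdots(X^{\bv_m}-1)$.

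The key observation is that each factor $X^{\bv_i}-1$ is a line polynomial in direction $\bv_i$: its support is $\{\bo,\bv_i\}$, which consists of two points (the $\bv_i$ are non-zero, being parallel to the non-zero differences $\bu_i-\bu$) lying on the line $\bo + \Q\bv_i$. Moreover the directions $\bv_1,\ldots,\bv_m$ are pairwise non-parallel by construction. Hence the hypotheses of Theorem \ref{thm: periodic decomposition sparse} are met, with $\varphi_i = X^{\bv_i}-1$, and that theorem yields a decomposition $c = c_1 + \ldots + c_m$ in which each $c_i$ is a sum of finitely many periodic $\bv_i$-fibers.

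Finally I would simply collect the pieces: each $c_i$ is a finite sum of periodic fibers, so $c = c_1 + \ldots + c_m$ is again a finite sum of periodic fibers, which is exactly the assertion of the corollary. I do not expect any genuine obstacle here, since all of the real content is carried by the two invoked theorems; the only points to verify are that $\supp(f)$ is non-empty (immediate from $f$ being non-trivial) and that difference polynomials qualify as line polynomials (immediate from $\bv_i \neq \bo$). The slightly delicate-sounding part is purely bookkeeping, namely that a finite sum of functions each of which is itself a finite sum of periodic fibers remains a finite sum of periodic fibers, which is clear.
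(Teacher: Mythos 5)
Your proposal is correct and follows essentially the same route as the paper: invoke Theorem \ref{thm: special annihilator} to obtain an annihilator of the form $(X^{\bv_1}-1)\cdots(X^{\bv_m}-1)$ with pairwise non-parallel $\bv_i$, note these difference polynomials are line polynomials, and apply Theorem \ref{thm: periodic decomposition sparse} to decompose $c$ into finite sums of periodic $\bv_i$-fibers. Your extra verifications (non-emptiness of $\supp(f)$, $\bv_i \neq \bo$) are fine and merely make explicit what the paper leaves implicit.
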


\begin{proof}
By Theorem \ref{thm: special annihilator} the configuration $c$ is annihilated by a polynomial
$$
(X^{\bv_1}-1) \cdots (X^{\bv_m}-1)
$$
where the vectors $\bv_1,\ldots,\bv_m$ are pairwise non-parallel and hence
by Theorem \ref{thm: periodic decomposition sparse} there exist configurations $c_1,\ldots,c_m$ such that each $c_i$ is a sum of finitely many periodic $\bv_i$-fibers and
$c=c_1+\ldots+c_m$.
Thus, $c$ is a sum of finitely many periodic fibers.
% \qed
\end{proof}

\subsection*{Application to $\R$-configurations with uniformly discrete supports that have annihilators}

In the following we consider one-dimensional $\R$-configurations, that is, functions $c \in \A^{\R}$ where $\A \subseteq \Z$ is a finite subset of integers.
Most of the concepts concerning $\Z^d$-configurations are naturally expanded for $\R$-configurations.
For example, the translation $\tau^v(c)$ of $c \in \A^{\R}$ by $v\in\R$ is defined such that $\tau^v(c)(u) = c(u-v)$ for all $u\in\R$.
Consequently, we say that $c$ is periodic if $\tau^v(c)=c$ for some non-zero $v$.
Naturally, the support $\supp(c) \subseteq \R$ of an $\R$-configuration $c$ is defined as the set 
% $\supp(c) \subseteq \R$ 
where $c$ gets non-zero values.

The concept of annihilators is also generalized to $\R$-configurations.
In other words, we say that an $\R$-configuration $c$ has a non-trivial annihilator if there exists a non-empty finite set $D=\{v_1,\ldots,v_n\} \subseteq \R$ and integers $a_1,\ldots,a_n \in \Z$ such that $a_1 \tau^{v_1}(c)+\ldots+a_n\tau^{v_n}(c)=0$.
We then say that $c$ is annihilated by the \emph{$\R$-polynomial} $f=f(x)=a_1x^{v_1}+\ldots+a_nx^{v_n}$ and may denote $fc=0$.

We say that a subset $S \subseteq \R$ is \emph{uniformly discrete} if
there exists a positive real number $r$ such that for any two distinct $s_1,s_2 \in S$ we have
$|s_1-s_2| \geq 2r$.
% such that any open ball of radius $r$ in $\R^d$ contains at most one point of $S$.
Largest such $r$ is called the \emph{packing radius} of $S$.

Next, we prove that any $\R$-configuration with a uniformly discrete support that has a non-trivial annihilator is necessarily periodic.
In the proof, we use the following lemma, which is based on the fact that every point has a dense orbit with respect to an irrational rotation of the circle \cite{kurka}.
We denote by $\lfloor x \rfloor$ and $\{x\}= x - \lfloor x \rfloor$ the integer and fractional parts, respectively, of a real number $x \in \R$.
% For the proof we need the following auxiliary lemma???

% \begin{lemma}
% APULEMMA??
% \end{lemma}
% EI EHKÄ SITTENKÄÄN LEMMAA!!??

\begin{lemma} \label{lemma: irrational rotations}
% linearly independent??*
Let $\alpha,\beta,r_1,r_2 \in \R$.
Moreover, assume that $r_1$ and $r_2$ are rationally independent, that is, $m_1 r_1=m_2 r_2$ if and only if $m_1=m_2=0$ where $m_1,m_2 \in \Z$.
For every $\varepsilon >0$, there exist $m_1,m_2 \in \Z$ such that $\alpha + m_1 r_1$ and $\beta + m_2 r_2$ are distinct and within distance $\varepsilon$ from each other.
\end{lemma}

\begin{proof}
By the assumption $\frac{r_1}{r_2}$ is an irrational number.
Consider the map $\rho \colon \R \to [0,1), \ x \mapsto \{x + \frac{r_1}{r_2}\}$.
It is well known that every point of $[0,1)$ has a dense orbit with respect to $\rho$ since $\frac{r_1}{r_2}$ is irrational \cite{kurka}.
This means that for every
$\varepsilon >0$ there exists $m$ such that
$$
0<\rho^m \left(\frac{\alpha-\beta}{r_2}\right ) < \frac{\varepsilon}{|r_2|}
$$
and hence
$$
% \left \{ \frac{\alpha-\beta}{r_2} + m_1 \cdot \frac{r_1}{r_2} \right \} =
0<\frac{\alpha-\beta}{r_2} + m_1 \cdot \frac{r_1}{r_2}
-\left \lfloor \frac{\alpha-\beta}{r_2} + m_1 \cdot \frac{r_1}{r_2} \right \rfloor
< \frac{\varepsilon}{|r_2|}.
$$
Now, by multiplying the inequality by $r_2$ and denoting $m_2 = \left \lfloor \frac{\alpha-\beta}{r_2} + m_1 \cdot \frac{r_1}{r_2} \right \rfloor$ we have
$$
0<|\alpha + m_1r_1 - (\beta + m_2r_2)|< \varepsilon. 
$$
Thus, $\alpha + m_1r_1$ and $\beta + m_2r_2$ are distinct and within distance $\varepsilon$.
\end{proof}

\begin{theorem}\label{thm: delone application}
  Let $c \in \A^{\R}$ be an $\R$-configuration with a uniformly discrete support.
  If $c$ has a non-trivial annihilator,
  then $c$ is periodic.
\end{theorem}

\begin{proof}
  Let $f = f(x)=a_1x^{t_1}+\ldots+a_sx^{t_s}$ be a non-trivial annihilator of $c$ and let $d$ be the rank of the additive abelian group $\Z[\supp(f)]$.
  Let $\{b_1,\ldots,b_d\}$ be a (minimal) generator set of this group.
  Let us define a projection map
  $$
  \pi \colon \R^d \to \R, \ (i_1,\ldots,i_d) \mapsto i_1b_1+\ldots+i_db_d.
  $$
  By the fundamental theorem of finitely generated abelian groups \cite{abstract-algebra} the group $\Z[\supp(f)]$ is isomorphic to $\Z^d$ since it has no torsion elements as a subgroup of $\R$.
  This means that every element of the group $\Z[\supp(f)]$ has a unique presentation of the form $i_1b_1+\ldots+i_db_d$ where $i_1,\ldots,i_d\in \Z$, that is,
  the restriction $\pi \restriction _ {\Z^d}$ is injective.
  
  % Note that every element of $\Z[\supp(f)]$ has a unique presentation in the form $i_1b_1+\ldots+i_db_d$ where $i_1,\ldots,i_d\in \Z$, that is, the function $(i_1,\ldots,i_d) \mapsto i_1b_1+\ldots+i_db_d$ is injective.
  % This follows from the fundamental theorem of finitely generated abelian groups \cite{abstract-algebra}.
  
  Define for all $\alpha \in \R$, a $\Z^d$-configuration $c^{(\alpha)} \in \A^{\Z^d}$ such that
  $$
  c^{(\alpha)}(i_1,\ldots,i_d) = c(\alpha + i_1 b_1 + \ldots + i_d b_d).
  $$
  Let us show that $c^{(\alpha)}$ is sparse for all $\alpha$.
  So, consider an arbitrary $c^{(\alpha)}$.
  Since $\supp(c)$ is uniformly discrete, there exists 
  %a rational number 
  $\delta >0$ such that each half-open interval
  $$I_k = [\alpha + k \delta, \alpha + (k+1) \delta)$$ contains at most one point of $\supp(c)$ for each $k\in\Z$.
  % (Moreover, the intervals $I_k$ where $k\in\Z$ partition the real line $\R$.)
  Thus, the sets
  $$
  S_k
  % = (c^{(\alpha)})^{-1}(I_k)
  = \{ (i_1,\ldots,i_d) \in \R^d \mid k \delta \leq i_1b_1 + \ldots + i_db_d < (k+1) \delta \}
  $$
  contain at most one element of $\supp(c^{(\alpha)}) \subseteq \Z^d$ by the mentioned injectivity of the map $\pi \restriction_{\Z^d}$. 
  % function $(i_1,\ldots,i_d) \mapsto i_1b_1+\ldots + i_db_d$.
  %Note that the set $S_k$ is an intersection of a closed and an open half-space.
  For $m\in\Z_+$ and $\bt \in \Z^d$, let $N(m,\bt)$ denote the number of sets $S_k$ that intersect the set $C_{m} +\bt$.
  % We show that there exists $a$ such that
  % $N(m,\bt) \leq a m$ for all $m \in \Z_+$ and $\bt \in \Z^d$.
  % TÄMÄ PITÄÄ KIRJOITTAA PAREMMIN!!
  Note that the set $C_m + \bt$ intersects $S_k$ if and only if the set $\pi(C_m + \bt) + \alpha$ intersects $I_k$.
  The set $\pi(C_m + \bt) + \alpha$ is an interval of length $(2m+1)l$ where $l$ is the length of the interval $\pi([0,1]^d)$ --- the image of the unit cube under the projection map $\pi$.
  So, we conclude that the set $\pi(C_m + \bt) + \alpha$ intersects with at most $(2m+1)l/\delta + 2$ sets $I_k$.
  In particular, there
  exists $a \in \Z_+$ such that
  $N(m,\bt) \leq a m$ for all $m \in \Z_+$ and $\bt \in \Z^d$.
  % and hence we can take $a= \lfloor l/\delta \rfloor + 2$.
  % It grows linearly with respect to $m$, that is, 
  % There exists $a$ such that
  % $N(m,\bt) \leq a m$ for all $m \in \Z_+$ and $\bt \in \Z^d$.
  % PERUSTELU!!
  % Indeed,.....
  Thus,
  $$
  |\supp(c^{(\alpha)}) \cap (C_m+\bt)| \leq N(m,\bt) \leq a m
  $$
  for all $m\in \Z_+$ and $\bt \in \Z^d$ and hence $c^{(\alpha)}$ is sparse.

% By the assumption $c$ has a non-trivial annihilator.
% and hence the $\Z^d$-configurations $c^{(\alpha)}$ have a common non-trivial annihilator.
Let us show that
the $\Z^d$-configurations $c^{(\alpha)}$ have a common non-trivial annihilator.
% By the assumption $c$ has a non-trivial annihilator,
% say 
% $$
% f(x)=a_1x^{t_1}+\ldots+a_sx^{t_s}
% $$
% where $a_1,\ldots,a_s\in \R\setminus\{0\},t_1,\ldots,t_s \in \Z[\supp(f)]$ and $s \geq 2$.
% be a non-trivial annihilator of $c$.
Now, let $\bt_i \in \Z^d$ be such that $\pi(\bt_i)=t_i$ for each $i$.
Denote $f_{\pi}(X) = a_1 X^{\bt_1}+ \ldots + a_s X^{\bt_s}$.
For any $\alpha\in\R$ and $\bu \in \Z^d$, we have
\begin{align*}
    f_{\pi}c^{(\alpha)}(\bu) &= a_1 c^{(\alpha)}(\bu-\bt_1) + \ldots + a_s c^{(\alpha)}(\bu-\bt_s) \\
    &= a_1 c(\alpha + \pi(\bu)-t_1) + \ldots + a_s c(\alpha + \pi(\bu)-t_s) \\
    &= (fc)(\alpha+\pi(\bu)) \\
    &= 0
\end{align*}
since $f$ is an annihilator of $c$.
Thus, $f_{\pi}$ is a common annihilator of the $\Z^d$-configurations $c^{(\alpha)}$.
% TÄMÄN VOISI PERUSTELLA!!
% PROJEKTIOT KÄYTTÖÖN....    

Now, by Theorem \ref{theorem: special annihilator common} there exist pairwise linearly independent vectors $\bv_1,\ldots,\bv_n \in \Z^d$ such that every $c^{(\alpha)}$ is annihilated by the same polynomial
$$
(X^{\bv_1}-1) \cdots (X^{\bv_n}-1).
$$
Thus, by Theorem \ref{thm: periodic decomposition sparse} we have
$$
c^{(\alpha)} = c^{(\alpha)}_1 + \ldots + c^{(\alpha)}_n
$$
where each $c^{(\alpha)}_i$ is a sum of finitely many $\bv_i$-periodic $\bv_i$-fibers.
Denote $\bv_i=(v_{i,1},\ldots,v_{i,d})$ for each $i \in \{1,\ldots,n\}$.
% YKSI YHTEINEN TULOS JOTA KÄYTETÄÄN SITTEN MOLEMMISSA KOHDISSA!!
Let us show that for all $\alpha$, we have $c^{(\alpha)} = c_i^{(\alpha)}$ for the same $i \in \{1,\ldots,n\}$.
% For this we will show that what exactly??

First, we will show that
for all $\alpha$, we have
$c^{(\alpha)} = c_i^{(\alpha)}$ for some $\alpha$ and $i \in \{1,\ldots,n\}$.
Assume on the contrary that
$c_i^{(\alpha)} \neq 0$ and $c_j^{(\alpha)} \neq 0$ for some $i \neq j$.
Then
there exist $\bu_i=(u_{i,1},\ldots,u_{i,d})$ and $\bu_j=(u_{j,1},\ldots,u_{j,d})$ such that
$c^{(\alpha)}_i(\bu_i) \neq 0$ and $c^{(\alpha)}_j(\bu_j) \neq 0$.
Since $c_i^{(\alpha)}$ is $\bv_i$-periodic and $c_j^{(\alpha)}$ is $\bv_j$-periodic, it follows that
$c^{(\alpha)}_i(\bu_i + t \bv_i) \neq 0$ and $c^{(\alpha)}_j(\bu_j + t \bv_j) \neq 0$ for all $t \in \Z$.
Thus, $c(\alpha_i + t(v_{i,1}b_1 + \ldots + v_{i,d}b_d))\neq 0$ and $c(\alpha_j + t(v_{j,1}b_1 + \ldots + v_{j,d}b_d))\neq 0$ for all $t \in \Z$ where
$\alpha_i=\alpha + u_{i,1} + \ldots +u_{i,d}$ 
and 
$\alpha_j=\alpha + u_{j,1} + \ldots +u_{j,d}$.
If 
$$
m(v_{i,1}b_1 + \ldots + v_{i,d}b_d) = m'(v_{j,1}b_1 + \ldots + v_{j,d}b_d)
$$
for some $m,m'\in \Z$,
then $mv_{i,1}=m'v_{j,1},\ldots, mv_{i,d}=m'v_{j,d}$ 
by the injectivity of the map $\pi \restriction _{\Z^d}$.
Thus, $m\bv_i =m'\bv_j$ and hence
$m=m'=0$ since $\bv_i$ and $\bv_j$ are linearly independent.
This means that the numbers $ v_{i,1}b_1+\ldots+v_{i,d}b_d$ and $ v_{j,1}b_1+\ldots+v_{j,d}b_d$ are rationally independent.
% Without loss of generality, we may assume that
Consequently,
for all $\varepsilon >0$, there exist $m,m' \in \Z$ such that $\alpha_i + m(v_{i,1}b_1 + \ldots + v_{i,d}b_d)$ and $\alpha_j + m'(v_{j,1}b_1 + \ldots + v_{j,d}b_d)$ are distinct and within distance $\varepsilon$ from each other by Lemma \ref{lemma: irrational rotations}.
This is a contradiction with the uniform discreteness of $\supp(c)$.

Finally, let us show that for all $\alpha$ we have $c^{(\alpha)}=c_i^{(\alpha)}$ for the same $i \in \{1,\ldots,n\}$.
Assume on the contary that we have for
some $\alpha$ and $\beta$
that $c^{(\alpha)}=c_i^{(\alpha)}$ and $c^{(\beta)} = c_j^{(\beta)}$ for $i \neq j$.
% Samanlainen perustelu kuin yllä ja saadaan ristiriita.
Let
$\bu_i=(u_{i,1},\ldots,u_{i,d})$ and $\bu_j=(u_{j,1},\ldots,u_{j,d})$ be such that
$c^{(\alpha)}_i(\bu_i) \neq 0$ and $c^{(\beta)}_j(\bu_j) \neq 0$ and
denote $\alpha_i = \alpha + u_{i,1} + \ldots +u_{i,d}$ and $\alpha_j = \beta + u_{j,1} + \ldots +u_{j,d}$.
Again, we have 
$c(\alpha_i + t(v_{i,1}b_1 + \ldots + v_{i,d}b_d))\neq 0$ and $c(\alpha_j + t(v_{j,1}b_1 + \ldots + v_{j,d}b_d))\neq 0$ for all $t \in \Z$.
We noticed already above
that
the numbers $ v_{i,1}b_1+\ldots+v_{i,d}b_d$ and $ v_{j,1}b_1+\ldots+v_{j,d}b_d$ are rationally independent.
% , we have by Lemma \ref{lemma: irrational rotations}
Thus, again by Lemma \ref{lemma: irrational rotations} there exist $m,m' \in \Z$ such that
$\alpha_i + m(v_{i,1}b_1 + \ldots + v_{i,d}b_d)$ and $\alpha_j + m'(v_{j,1}b_1 + \ldots + v_{j,d}b_d)$
are different but arbitrarily close.
A contradiction with the uniform discreteness of $\supp(c)$.

So, we conclude that for all $\alpha \in \R$, we have $c^{(\alpha)} = c_i^{(\alpha)}$ for the same $i\in\{1,\ldots,n\}$.
It follows that every $c^{(\alpha)}$ is $\bv_i$-periodic and hence  
$c$ is $(v_{i,1} b_1 + \ldots + v_{i,d}b_d)$-periodic.
\end{proof}

Finally, we give an example that shows that the above theorem does not hold for general $\R$-configurations that have annihilators.

\begin{example}
% Let $\alpha \in \R$ be an irrational number.
Consider the $\R$-configurations
$c_1,c_2 \in \{0,1\}^{\R}$ defined such that
  $$
  c_1(i) = 
  \begin{cases}
    1 &\text{, if } i \in \Z \\
    0 &\text{, otherwise}
  \end{cases}
  $$
  and
  $$
  c_2(i) = 
  \begin{cases}
    1 &\text{, if } i \in \alpha \Z  \\
    0 &\text{, otherwise}
  \end{cases}
  $$
  where $\alpha$ is an irrational number.
  % $c_1(x) = \sum_{i \in \Z} x^i$ and $c_2(x) = \sum_{i \in \Z} x^{i \alpha}$ where $\alpha$ is an irrational number. 
  Both $c_1$ and $c_2$ are periodic, that is, annihilated by non-trivial difference $\R$-polynomials. 
  More precisely, $c_1$ is annihilated by $x - 1$ and $c_2$ is annihilated by $x^{\alpha} - 1$. 
  Consequently, their sum $c = c_1 + c_2 \in \{0,1,2\}^{\R}$ has a non-trivial annihilator 
  $(x - 1)(x^{\alpha} - 1)$. However, $c$ is non-periodic since $\alpha$ is irrational.
\end{example}

% \section{Conclusion}

% We proved to improvements of the periodic decomposition theorem and gave for each improvement an application.
% The first improvement

\bibliographystyle{plain}
\bibliography{biblio}

\newpage

\end{document}